\begin{document}

\preprint{APS/123-QED}

\title{Characterizing Kirkwood-Dirac nonclassicality and uncertainty diagram based on discrete Fourier transform}

\author{Ying-Hui Yang$^{1}$}
 \email{yangyinghui4149@163.com}
\author{Bing-Bing Zhang$^{1}$}%
\author{Xiao-Li Wang$^{1}$}%
\author{Shi-Jiao Geng$^{1}$}
\author{Pei-Ying Chen$^{1}$}
\affiliation{%
 $^{1}$School of Mathematics and Information Science, Henan Polytechnic University, Jiaozuo, 454000, China\\
}%

\date{\today}

\begin{abstract}
In this paper, we investigate the Kirkwood-Dirac nonclassicality and uncertainty diagram based on discrete Fourier transform (DFT) in a $d$ dimensional system. The uncertainty diagram of complete incompatibility bases $\mathcal {A},\mathcal {B}$ are characterized by De Bi\`{e}vre [arXiv: 2207.07451]. We show that for the uncertainty diagram of the DFT matrix which is a transition matrix from basis $\mathcal {A}$ to basis $\mathcal {B}$, there is no ``hole" in the region of the $(n_{\mathcal {A}}, n_{\mathcal {B}})$-plane above and on the line $n_{\mathcal {A}}+n_{\mathcal {B}}\geq d+1$,  whether the bases $\mathcal {A},\mathcal {B}$ are not complete incompatible bases or not. Then we present that the KD nonclassicality of a state based on the DFT matrix can be completely characterized by using the support uncertainty relation $n_{\mathcal {A}}(\psi)n_{\mathcal {B}}(\psi)\geq d$, where $n_{\mathcal {A}}(\psi)$ and $n_{\mathcal {B}}(\psi)$ count the number of nonvanishing coefficients in the basis $\mathcal {A}$ and $\mathcal {B}$ representations, respectively. That is, a state $|\psi\rangle$ is KD nonclassical if and only if $n_{\mathcal {A}}(\psi)n_{\mathcal {B}}(\psi)> d$, whenever $d$ is prime or not. That gives a positive answer to the conjecture in [Phys. Rev. Lett. \textbf{127}, 190404 (2021)].
\end{abstract}

\pacs{03.67.Mn, 03.67.Hk}
\maketitle


\section{Introduction}
In quantum mechanics, there exist many nonclassical properties such as entanglement, discord, coherence, nonlocality, contextuality, noncommutativity of two operators, uncertainty principles, and negativity or nonreality of quasiprobability distributions. By studying these nonclassical properties one can not only obtain a better understanding of quantum mechanics but also explore their applications in quantum information processing. Kirkwood-Dirac (KD) distribution is a quasiprobability distribution that is independently developed by Kirkwood \cite{Kirkwood.1933} and Dirac \cite{Dirac.1945}. It is a finite dimensional analog of the well-known Wigner distribution \cite{Wigner.1932,Wootters.1987}. A quasiprobability distribution behaves like a probability distribution, but negative or nonreal values are allowed to appear in the distribution. For a quantum state and some observables, the KD distribution of this state can be obtained. A quantum state is called KD classical if KD distribution of the state is real nonnegative everywhere, i.e., a probability distribution. Otherwise, it is called KD nonclassical. Recently, KD nonclassicality has come to the forefront due to the application in quantum tomography \cite{Lundeen.2012,Bamber.2014,Thekkadath.2016} and weak measurements \cite{Pusey.2014,Dressel.2015}.

The noncommutativity of observables cannot guarantee the KD nonclassicality of a state. The KD nonclassicality of a state depends not only on the state but also on the eigenbases of observables. Given a state $|\psi\rangle$ and an eigenbasis $\mathcal {A}$ of observable $A$ and an eigenbasis $\mathcal {B}$ of observable $B$, authors in Ref.\cite{Arvidsson-Shukur.2021} gave a sufficient condition on the KD nonclassicality of a state, that is, $|\psi\rangle$ is KD nonclassical if $n_{\mathcal {A}}(\psi)+n_{\mathcal {B}}(\psi)> \lfloor\frac{3d}{2}\rfloor$, where $n_{\mathcal {A}}(\psi)$ counts the number of nonvanishing coefficients in the basis $\mathcal {A}$ representation, similar for $n_{\mathcal {B}}(\psi)$. In 2021, De Bi\`{e}vre \cite{Bievre.2021} introduce the concept of complete incompatibility on eigenbases $\mathcal {A}, \mathcal {B}$ of two observables $A, B$, and presented the relations among complete incompatibility, support uncertainty, and KD nonclassicality, also showed that $|\psi\rangle$ is KD nonclassical if $n_{\mathcal {A}}(\psi)+n_{\mathcal {B}}(\psi)> d+1$ and $\langle a_{i}|b_{j}\rangle\neq 0$, where $|a_{i}\rangle$ and $|b_{j}\rangle$ are the eigenvectors of $\mathcal {A}, \mathcal {B}$, respectively. Xu \cite{Xu.2022} generalized the concept of complete incompatibility to $s$-order incompatibility and established a link between $s$-order incompatibility and the minimal support uncertainty. Recently, De Bi\`{e}vre \cite{Bievre.2022} provided an in-depth study of the links of complete incompatibility to support uncertainty and to KD nonclassicality.

Discrete Fourier transform (DFT) is an important linear transform in quantum information theory. The uncertainty diagram is a practical and visual tool to study the uncertainty of a state with respect to bases $\mathcal {A},\mathcal {B}$. De Bi\`{e}vre \cite{Bievre.2022} characterized the uncertainty diagram of complete incompatibility bases. However, for the DFT matrix with nonprime order, the bases $\mathcal {A},\mathcal {B}$ are not completely incompatible bases. The uncertainty diagram of the DFT matrix with nonprime order is still unclear. In addition, in $d$-dimensional system, all the states are nonclassical except for basis vectors if $d$ is prime \cite{Bievre.2021}. However, for nonprime $d$, it is still an open problem. In this paper, we consider these two questions. Firstly, for the uncertainty diagram of DFT, we show that for any dimension $d$ there is no ``hole" in the region of the $(n_{\mathcal {A}}, n_{\mathcal {B}})$-plane above and on the line $n_{\mathcal {A}}+n_{\mathcal {B}}\geq d+1$, i.e., there is no absence of states with $(n_{\mathcal {A}}(\psi), n_{\mathcal {B}}(\psi))$ in the region. We also show some positions of holes in the uncertainty diagram of DFT. Secondly, we present the KD nonclassicality of a state based on the DFT matrix that is a transition matrix of bases $\mathcal {A},\mathcal {B}$. The KD nonclassicality of a state based on the DFT matrix can be completely characterized by using the support uncertainty relation $n_{\mathcal {A}}(\psi)n_{\mathcal {B}}(\psi)\geq d$. That is, a state $|\psi\rangle$ is KD nonclassical if and only if $n_{\mathcal {A}}(\psi)n_{\mathcal {B}}(\psi)> d$, whenever $d$ is prime or not. In other words, the lower bound, $n_{\mathcal {A}}(\psi)n_{\mathcal {B}}(\psi)= d$, is just attained for the KD classical states. The DFT is an example of a transition matrix for mutually unbiased bases (MUBs). For general MUBs, we also give a sufficient condition on the KD nonclassicality.

The rest of this paper is organized as follows. In Sec. \uppercase\expandafter{\romannumeral2}, we recall some relevant notions and notations. In Sec. \uppercase\expandafter{\romannumeral3} we study the uncertainty diagram of DFT. In Sec. \uppercase\expandafter{\romannumeral4}, we characterize the KD classicality of a state based on DFT and give a sufficient condition of KD nonclassicality for general MUBs. Conclusions are given in Sec. \uppercase\expandafter{\romannumeral5}.

\section{Preliminaries}

\theoremstyle{remark}
\newtheorem{definition}{\indent Definition}
\newtheorem{lemma}{\indent Lemma}
\newtheorem{theorem}{\indent Theorem}
\newtheorem{corollary}{\indent Corollary}

\def\QEDclosed{\mbox{\rule[0pt]{1.3ex}{1.3ex}}}
\def\QED{\QEDclosed}
\def\proof{\indent{\em Proof}.}
\def\endproof{\hspace*{\fill}~\QED\par\endtrivlist\unskip}

Consider a Hilbert space $\mathcal {H}$ with dimension $d$. Let an orthonormal basis $\mathcal {A}=\{|a_{i}\rangle\}_{i=0}^{d-1}$, respectively $\mathcal {B}=\{|b_{j}\rangle\}_{j=0}^{d-1}$, be the eigenbasis of observable $A$, respectively of observable $B$. Let $U$ be the unitary transition matrix with entries $U_{ij}=\langle a_{i}|b_{j}\rangle$ from basis $\mathcal {A}$ to basis $\mathcal {B}$. In terms of these two bases, the Kirkwood-Dirac (KD) distribution of a state $|\psi\rangle \in \mathcal {H}$ can be written as
\begin{equation}
\label{Qij}
Q_{ij}=\langle a_{i}|\psi\rangle\langle\psi|b_{j}\rangle\langle b_{j}|a_{i}\rangle, i,j\in \mathbb{Z}_{d}.
\end{equation}
It is a quasi-probability distribution and satisfies $\sum_{i,j=0}^{d-1}Q_{ij}=1$ with conditional probabilities $Q(a_{i}|\psi)=\sum_{j=0}^{d-1}Q_{ij}=|\langle a_{i}|\psi\rangle|^{2}$ and $Q(b_{j}|\psi)=\sum_{i=0}^{d-1}Q_{ij}=|\langle b_{j}|\psi\rangle|^{2}$. A state $|\psi\rangle$ is called to be classical if the KD distribution of $|\psi\rangle$ is a probability distribution, i.e., $Q_{ij}\geq 0$ for all $i,j \in \mathbb{Z}_{d}$. Otherwise, $|\psi\rangle$ is called to be nonclassical. Obviously, all of the basis vectors $|a_{i}\rangle$ and $|b_{j}\rangle$ are classical.

Given a state $|\psi\rangle \in \mathcal {H}$, let $n_{\mathcal {A}}(\psi)$, respectively $n_{\mathcal {B}}(\psi)$, be the number of nonzero components of $|\psi\rangle$ on $\mathcal {A}$, respectively on $\mathcal {B}$. That is, $n_{\mathcal {A}}(\psi)=|S_{\psi}|$ and $n_{\mathcal {B}}(\psi)=|T_{\psi}|$, where
\begin{equation}
\label{n-ab}
\begin{aligned}
S_{\psi}&=&\{i|\langle a_{i}|\psi\rangle \neq 0, i\in  \mathbb{Z}_{d}\},\\
T_{\psi}&=&\{j|\langle b_{j}|\psi\rangle \neq 0, j\in  \mathbb{Z}_{d}\},
\end{aligned}
\end{equation}
and $|\cdot|$ denotes the cardinality of a set.

Two bases $\mathcal {A}$ and $\mathcal {B}$ are called completely incompatible \cite{Bievre.2021} if all index set $S,T\in \mathbb{Z}_{d}$ for which $|S|+|T|\leq d$ have the property that
\begin{equation}
\mathcal {H}(S,T):=\Pi_{\mathcal {A}}(S)\mathcal {H}\bigcap\Pi_{\mathcal {B}}(T)\mathcal {H}=\{0\},
\end{equation}
where $\Pi_{\mathcal {A}}(S)$ is an orthogonal projector $\sum_{i\in S}|a_{i}\rangle\langle a_{i}|$ and $\Pi_{\mathcal {A}}(S)\mathcal {H}$ is a $|S|$-dimensional subspace. Notice that any $|\psi\rangle\in\Pi_{\mathcal {A}}(S)\mathcal {H}$ implies $S_{\psi}\subseteq S$. If $\mathcal {A}$ and $\mathcal {B}$ are completely incompatible, the only classical states are the basis states \cite{Bievre.2022}.

The uncertainty diagram for orthonormal bases $\mathcal {A},\mathcal {B}$, denoted by UNCD$(\mathcal {A},\mathcal {B})$, is a set of points $(n_{\mathcal {A}},n_{\mathcal {B}})\in \mathbb{Z}_{d+1}^{*}\times \mathbb{Z}_{d+1}^{*}$ in the $n_{\mathcal {A}}n_{\mathcal {B}}$-plane for which there exists a state $|\psi\rangle$ such that $n_{\mathcal {A}}(\psi)=n_{\mathcal {A}}$ and $n_{\mathcal {B}}(\psi)=n_{\mathcal {B}}$, where $\mathbb{Z}_{d+1}^{*}=\mathbb{Z}_{d+1}\setminus \{0\}$. For any state $|\psi\rangle \in \mathcal {H}$, we have $n_{\mathcal {A}}(\psi)n_{\mathcal {B}}(\psi)\geq$ max$_{i,j}|\langle a_{i}|b_{j}\rangle|^{-2}$, where $i,j\in  \mathbb{Z}_{d}$\cite{Bievre.2021,Bievre.2022,Donoho.1989}. It is called the support uncertainty relation. If $\mathcal {A},\mathcal {B}$ are mutually unbiased bases (MUBs)\cite{Planat.2006,Durt.2010}, i.e., $|\langle a_{i}|b_{j}\rangle|=\frac{1}{\sqrt{d}}$ for all $i,j\in  \mathbb{Z}_{d}$, we have $n_{\mathcal {A}}(\psi)n_{\mathcal {B}}(\psi)\geq d$. It means that all the points $(n_{\mathcal {A}},n_{\mathcal {B}})\in$ UNCD$(\mathcal {A},\mathcal {B})$ are above or on the hyperbola $n_{\mathcal {A}}n_{\mathcal {B}}= d$.

The following lemma was introduced in Ref.\cite{Bievre.2022}. It can be employed to determine whether a point $(n_{\mathcal {A}},n_{\mathcal {B}})$ belongs to UNCD$(\mathcal {A},\mathcal {B})$.
\begin{lemma}
\label{exist_Bievre}
Let $S, T$ be two subsets of $\mathbb{Z}_{d}$ and suppose dim $\mathcal {H}(S,T)=L\geq 1$. Suppose that for all $S'\subseteq S$ for which $|S'|=|S|-1$, one has dim $\mathcal {H}(S',T)\leq L-1$, and that for all $T'\subseteq T$ for which $|T'|=|T|-1$, one has dim $\mathcal {H}(S,T')\leq L-1$. Then the set of $|\psi\rangle \in \mathcal {H}(S,T)$ for which $n_{\mathcal {A}}(\psi)=|S|$, $n_{\mathcal {B}}(\psi)=|T|$ is an open and dense set in $\mathcal {H}(S,T)$. The opposite implication is also true.
\end{lemma}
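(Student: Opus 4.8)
The plan is to realize the full-support set
\[
\mathcal{F}=\{\,|\psi\rangle\in\mathcal{H}(S,T):n_{\mathcal{A}}(\psi)=|S|,\ n_{\mathcal{B}}(\psi)=|T|\,\}
\]
as the complement, inside the $L$-dimensional vector space $\mathcal{H}(S,T)$, of finitely many proper linear subspaces, and then to invoke the elementary fact that over $\mathbb{C}$ such a complement is automatically open and dense. First I would note that every $|\psi\rangle\in\mathcal{H}(S,T)$ obeys $S_\psi\subseteq S$ and $T_\psi\subseteq T$, so the equalities $n_{\mathcal{A}}(\psi)=|S|$ and $n_{\mathcal{B}}(\psi)=|T|$ hold precisely when none of the linear functionals $\langle a_i|\psi\rangle$ ($i\in S$) and $\langle b_j|\psi\rangle$ ($j\in T$) vanishes. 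Being the common nonvanishing locus of finitely many functionals, $\mathcal{F}$ is therefore open in $\mathcal{H}(S,T)$ without any hypothesis; the real content is density.

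The key step is to identify the vanishing loci. For a fixed $i_0\in S$ I would show
\[
\{\,|\psi\rangle\in\mathcal{H}(S,T):\langle a_{i_0}|\psi\rangle=0\,\}=\mathcal{H}(S\setminus\{i_0\},T),
\]
where $\supseteq$ is immediate from $\Pi_{\mathcal{A}}(S\setminus\{i_0\})\mathcal{H}\subseteq\Pi_{\mathcal{A}}(S)\mathcal{H}$, and $\subseteq$ holds because a vector of $\Pi_{\mathcal{A}}(S)\mathcal{H}$ whose $i_0$-coefficient vanishes already lies in $\Pi_{\mathcal{A}}(S\setminus\{i_0\})\mathcal{H}$; the symmetric statement holds for each $j_0\in T$. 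Hence
\[
\mathcal{H}(S,T)\setminus\mathcal{F}=\bigcup_{i_0\in S}\mathcal{H}(S\setminus\{i_0\},T)\,\cup\,\bigcup_{j_0\in T}\mathcal{H}(S,T\setminus\{j_0\}).
\]
Under the hypotheses each term has dimension $\leq L-1$ and so is a proper subspace of $\mathcal{H}(S,T)$; since a finite union of proper subspaces of a finite-dimensional complex vector space has dense complement, $\mathcal{F}$ is dense, which settles the forward implication.

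For the opposite implication I would argue the contrapositive on density (openness being automatic). If some dimension condition fails, say $\dim\mathcal{H}(S\setminus\{i_0\},T)=L$, then because $\mathcal{H}(S\setminus\{i_0\},T)\subseteq\mathcal{H}(S,T)$ and the two have equal dimension they coincide; thus every $|\psi\rangle\in\mathcal{H}(S,T)$ satisfies $\langle a_{i_0}|\psi\rangle=0$, forcing $n_{\mathcal{A}}(\psi)<|S|$ and hence $\mathcal{F}=\emptyset$, which cannot be dense in the nonzero space $\mathcal{H}(S,T)$. The $T$-side is symmetric.

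I expect the only real subtlety to be the clean identification of each single-coefficient vanishing locus with a smaller intersection subspace $\mathcal{H}(S',T)$ or $\mathcal{H}(S,T')$, together with the observation that the hypothesis ``$\dim\leq L-1$'' is exactly what upgrades these to \emph{proper} subspaces; once this is in place, the openness of a nonvanishing locus and the density of the complement of finitely many proper subspaces over $\mathbb{C}$ are standard.
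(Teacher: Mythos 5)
Your proof is correct and complete: the identification of each single-coefficient vanishing locus inside $\mathcal{H}(S,T)$ with the smaller intersection space $\mathcal{H}(S\setminus\{i_0\},T)$ or $\mathcal{H}(S,T\setminus\{j_0\})$, the observation that the hypothesis makes each of these a proper subspace so that their finite union is closed and nowhere dense (hence the complement is open and dense), and the contrapositive argument for the converse (equal dimensions force $\mathcal{H}(S\setminus\{i_0\},T)=\mathcal{H}(S,T)$, emptying the full-support set) are all sound. Note that this paper does not prove the lemma itself but quotes it from De Bi\`{e}vre (arXiv:2207.07451); your argument is essentially the same standard subspace-union argument used in that reference, so there is no substantive methodological difference to report.
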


By Lemma \ref{exist_Bievre}, the point $(d, d)$ belongs to the UNCD$(\mathcal {A},\mathcal {B})$ since $\dim\mathcal {H}(S,T)=\dim\mathcal {H}=d$ and $\dim\mathcal {H}(S',T)=\dim\mathcal {H}(S,T')=d-1$.
In order to better employ Lemma \ref{exist_Bievre}, let us first consider subspace $\mathcal {H}(S,T)$. Without loss of generality, let $S=\{0,1,\ldots,k-1\}$, $T=\{0,1,\ldots,l-1\}$ and $|\varphi\rangle=\sum_{j=0}^{l-1}\beta_{j}|b_{j}\rangle$. If $k=d$, then $\mathcal {H}(S,T)=\Pi_{\mathcal {B}}(T)\mathcal {H}$. If $k\neq d$, $\mathcal {H}(S,T)$ is the null space of a $(d-k) \times l$ matrix. Notice that \cite{Bievre.2021}
\begin{equation}
\begin{aligned}
 \mathcal {H}(S,T)
=& \{(\beta_{0},\ldots,\beta_{l-1}) | \langle a_{i}|\varphi\rangle=0, i\in \mathbb{Z}_{d}\setminus S\}. \\
=& \{(\beta_{0},\ldots,\beta_{l-1}) | \sum_{j=0}^{l-1}\langle a_{i}|b_{j}\rangle\beta_{j}=0, i\in \mathbb{Z}_{d}\setminus S\}.
\end{aligned}
\end{equation}
It follows that $\mathcal {H}(S,T)$ is the null space of the matrix $(\langle a_{i}|b_{j}\rangle)_{(d-k) \times l}$. In this paper, a submatrix of a matrix $U$ is denoted by
\begin{eqnarray}
U\left(
\begin{array}{cccc}
i_{0}, i_{1}, \ldots, i_{s}; \\
j_{0}, j_{1}, \ldots, j_{t}.
\end{array}
\right),
\end{eqnarray}
where $i_{k}$ and $j_{l}$ are the $i_{k}$-th row and $j_{l}$-th column of $U$ and $i_{k}, j_{l}\in \mathbb{Z}_{d}$.

Now an improved lemma is given to show the exitance of point $(n_{\mathcal {A}},n_{\mathcal {B}})$ in UNDC$(\mathcal {A}, \mathcal {B})$.

\begin{lemma}
\label{exist_mine}
In UNDC$(\mathcal {A}, \mathcal {B})$, suppose $n_{\mathcal {A}}\neq d$. Then a point $(n_{\mathcal {A}}, n_{\mathcal {B}})\in$ UNDC$(\mathcal {A}, \mathcal {B})$ if and only if there exists a $(d-n_{\mathcal {A}}) \times n_{\mathcal {B}}$ submatrix $M$ of the transition matrix $U$
\begin{eqnarray}
M=U\left(
\begin{array}{cccc}
i_{0}, i_{1}, \ldots, i_{d-n_{\mathcal {A}}-1}; \\
j_{0}, j_{1}, \ldots, j_{n_{\mathcal {B}}}.
\end{array}
\right),
\end{eqnarray}
which satisfies the following three conditions:

(i) Rank$(M) < n_{\mathcal {B}}$;

(ii) Rank$(M')=$Rank$(M)+1$, where
\begin{eqnarray}
\label{M'}
M'=U\left(
\begin{array}{cccc}
i_{0}, i_{1}, \ldots, i_{d-n_{\mathcal {A}}-1}, i_{d-n_{\mathcal {A}}}; \\
j_{0}, j_{1}, \ldots, j_{n_{\mathcal {B}}}.
\end{array}
\right)
\end{eqnarray}
and $i_{d-n_{\mathcal {A}}} \in \mathbb{Z}_{d}\setminus\{i_{0}, i_{1}, \ldots, i_{d-n_{\mathcal {A}}-1}\}$;

(iii) Rank$(M'')=$Rank$(M)$, where $M''$ is a new submatrix of $U$ that is obtained by removing one column of $M$.
\end{lemma}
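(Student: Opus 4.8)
The plan is to recognize that the three rank conditions on $M$, $M'$, $M''$ are exactly the matrix translations of the subspace-dimension hypotheses in Lemma \ref{exist_Bievre}, so that the whole statement follows by reading Lemma \ref{exist_Bievre} through the rank-nullity theorem. First I would fix the dictionary between subsets and submatrices. Writing $S=\mathbb{Z}_{d}\setminus\{i_{0},\ldots,i_{d-n_{\mathcal{A}}-1}\}$ and letting $T=\{j_{0},\ldots,j_{n_{\mathcal{B}}-1}\}$ be the column-index set of $M$, so that $|S|=n_{\mathcal{A}}$ and $|T|=n_{\mathcal{B}}$, the description of $\mathcal{H}(S,T)$ as the null space of $(\langle a_{i}|b_{j}\rangle)_{i\notin S,\,j\in T}$ recalled before the statement identifies $\mathcal{H}(S,T)$ with $\ker M$. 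Hence rank-nullity gives $\dim\mathcal{H}(S,T)=n_{\mathcal{B}}-\mathrm{Rank}(M)$, so condition (i), namely $\mathrm{Rank}(M)<n_{\mathcal{B}}$, is precisely $L:=\dim\mathcal{H}(S,T)\geq 1$.

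Next I would translate the two ``one-smaller-subset'' hypotheses. Removing a single index $s$ from $S$ replaces $\mathbb{Z}_{d}\setminus S$ by $(\mathbb{Z}_{d}\setminus S)\cup\{s\}$, so $\mathcal{H}(S',T)=\ker M'$, where $M'$ is $M$ with the extra row $i_{d-n_{\mathcal{A}}}:=s\in S$ appended; thus $\dim\mathcal{H}(S',T)=n_{\mathcal{B}}-\mathrm{Rank}(M')$. Since appending a row raises the rank by at most one, the inequality $\dim\mathcal{H}(S',T)\leq L-1$ holds if and only if $\mathrm{Rank}(M')=\mathrm{Rank}(M)+1$, which is condition (ii). Symmetrically, removing a single index $t$ from $T$ deletes the corresponding column, so $\mathcal{H}(S,T')=\ker M''$ with $\dim\mathcal{H}(S,T')=(n_{\mathcal{B}}-1)-\mathrm{Rank}(M'')$; since deleting a column lowers the rank by at most one and never raises it, $\dim\mathcal{H}(S,T')\leq L-1$ holds if and only if $\mathrm{Rank}(M'')=\mathrm{Rank}(M)$, which is condition (iii). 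The one point I would emphasize is quantifier placement: Lemma \ref{exist_Bievre} demands the dimension drop for \emph{every} admissible $S'$ and \emph{every} admissible $T'$, so (ii) and (iii) must be read as holding for every appended row $i_{d-n_{\mathcal{A}}}\in\mathbb{Z}_{d}\setminus\{i_{0},\ldots,i_{d-n_{\mathcal{A}}-1}\}$ and for every deleted column of $M$, respectively.

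With the dictionary in place, both directions follow by invoking Lemma \ref{exist_Bievre}. For the ``if'' direction, conditions (i)--(iii) make the hypotheses of Lemma \ref{exist_Bievre} hold for this $S,T$, so the states $|\psi\rangle\in\mathcal{H}(S,T)$ with $n_{\mathcal{A}}(\psi)=|S|=n_{\mathcal{A}}$ and $n_{\mathcal{B}}(\psi)=|T|=n_{\mathcal{B}}$ form a dense, hence nonempty, set, placing $(n_{\mathcal{A}},n_{\mathcal{B}})$ in UNCD$(\mathcal{A},\mathcal{B})$. For the ``only if'' direction, a witnessing state $|\psi\rangle$ has supports $S_{\psi},T_{\psi}$ of sizes $n_{\mathcal{A}},n_{\mathcal{B}}$; taking $S=S_{\psi}$ and $T=T_{\psi}$, the state $|\psi\rangle$ itself certifies that each functional $\langle a_{i}|\cdot\rangle$ ($i\in S$) and each coordinate on $T$ is not identically zero on $\mathcal{H}(S,T)$, so the set of states with exact supports is open and dense; the converse implication of Lemma \ref{exist_Bievre} then yields the dimension conditions, which by the dictionary become (i)--(iii) for the submatrix $M$ read off from $S,T$.

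The assumption $n_{\mathcal{A}}\neq d$ is what guarantees $\mathbb{Z}_{d}\setminus S\neq\emptyset$, so that $M$ genuinely has rows and the null-space description applies; the case $n_{\mathcal{A}}=d$ degenerates to $\mathcal{H}(S,T)=\Pi_{\mathcal{B}}(T)\mathcal{H}$ and is treated separately. I expect the only real care to lie in the bookkeeping of the quantifiers and in the direction of the two elementary rank inequalities above; the heart of the argument is just the identification $\dim\mathcal{H}(S,T)=n_{\mathcal{B}}-\mathrm{Rank}(M)$, after which each hypothesis of Lemma \ref{exist_Bievre} collapses to a one-line rank statement.
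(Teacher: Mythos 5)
Your proposal is correct and follows essentially the same route as the paper's own proof: identify $\dim\mathcal{H}(S,T)=n_{\mathcal{B}}-\mathrm{Rank}(M)$ via rank--nullity, observe that conditions (i)--(iii) are exactly the dimension-drop hypotheses of Lemma~\ref{exist_Bievre}, and invoke that lemma in both directions. Your handling of the necessity (taking $S=S_{\psi}$, $T=T_{\psi}$ and using the witnessing state to certify the hypotheses) is, if anything, a slightly cleaner packaging of the paper's contradiction argument, but it is the same idea.
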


The proof of Lemma \ref{exist_mine} is given in Appendix \ref{Prove exist_mine}. Note that $M'$ in Eq.(\ref{M'}) is a submatrix of $U$ and condition (ii) means the rank will increase by one if a new row is added to the submatrix $M$. And condition (iii) means the rank is invariant if a column of $M$ is removed.

Now we introduce the discrete Fourier transform (DFT). Suppose that $F$ is the DFT matrix with $F=(F_{ij})=(\frac{1}{\sqrt{d}}\omega_{d}^{ij})$, where $i,j\in \mathbb{Z}_{d}$ and $\omega_{d}=e^{2\pi \sqrt{-1}/d}$. Obviously, $F$ is a symmetric and reversible Vandermonde matrix. The DFT matrix $F$ has the following property.

\begin{lemma}
\label{properties on DFT}
Suppose $m | d$ but $m\neq d$. Let
\begin{eqnarray}
\label{M-(i)}
M=&F\left(
\begin{array}{cccc}
 i_{0}, i_{0}+m, \cdots, i_{0}+(t-1)m;\\
 j_{0}, j_{1}, \cdots, j_{s-1}.
\end{array}
\right),
\end{eqnarray}
where $t\leq \frac{d}{m}$ and $j_{l}\neq j_{k} \mod \frac{d}{m}$ for $l\neq k$. Then Rank($M$)$=\min\{s,t\}$.

\end{lemma}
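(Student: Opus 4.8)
The plan is to write out the entries of $M$ explicitly, strip away the factors that cannot affect the rank, and recognize what remains as a Vandermonde matrix with distinct nodes. Set $n=d/m$, so that $d=mn$ and $t\le n$. The selected rows are labeled by $i_{0}+km$ for $k=0,1,\dots,t-1$, and the $(k,l)$ entry of $M$ is
\begin{equation}
M_{kl}=\frac{1}{\sqrt{d}}\,\omega_{d}^{(i_{0}+km)j_{l}}
=\frac{1}{\sqrt{d}}\,\omega_{d}^{i_{0}j_{l}}\,\omega_{d}^{kmj_{l}}.
\end{equation}
Since $m\mid d$, we have $\omega_{d}^{kmj_{l}}=e^{2\pi\sqrt{-1}\,kmj_{l}/d}=e^{2\pi\sqrt{-1}\,kj_{l}/n}=\omega_{n}^{kj_{l}}$, so the entire dependence on the row index $k$ is governed by $\omega_{n}$.

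Next I would remove the column-dependent prefactors. The quantity $\frac{1}{\sqrt{d}}\,\omega_{d}^{i_{0}j_{l}}$ depends only on $l$ and is nonzero, so it can be pulled out as an invertible diagonal factor acting on the right; this leaves the rank unchanged. Thus it suffices to compute the rank of $\widetilde{M}$ with entries $\widetilde{M}_{kl}=\omega_{n}^{kj_{l}}=x_{l}^{\,k}$, where $x_{l}:=\omega_{n}^{j_{l}}$, $k=0,1,\dots,t-1$ and $l=0,1,\dots,s-1$. The hypothesis $j_{l}\neq j_{k}\bmod\frac{d}{m}$ for $l\neq k$ says precisely that $j_{0},\dots,j_{s-1}$ are pairwise distinct modulo $n$, so the nodes $x_{0},\dots,x_{s-1}$ are pairwise distinct $n$-th roots of unity. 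Hence $\widetilde{M}$ is a Vandermonde matrix built on $s$ distinct nodes and the consecutive powers $0,1,\dots,t-1$.

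It then remains to read off the rank. Because $M$ has $t$ rows — and here $t\le n$ guarantees that the labels $i_{0}+km$ are genuinely distinct elements of $\mathbb{Z}_{d}$, so that $M$ is a true $t\times s$ submatrix — and $s$ columns, we get $\mathrm{Rank}(M)\le\min\{s,t\}$. For the reverse inequality, put $r=\min\{s,t\}$ and select the rows carrying the powers $0,1,\dots,r-1$ (possible since $r\le t$) together with any $r$ of the columns (possible since $r\le s$). The resulting $r\times r$ matrix $(x_{l_{p}}^{\,q})_{q,p=0}^{r-1}$ is a square Vandermonde matrix with determinant $\prod_{p<p'}(x_{l_{p'}}-x_{l_{p}})\neq0$, since the nodes are distinct. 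Therefore $\mathrm{Rank}(\widetilde{M})\ge r$, and combining the two bounds yields $\mathrm{Rank}(M)=\mathrm{Rank}(\widetilde{M})=\min\{s,t\}$.

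I do not expect a serious obstacle here; once the entries are rewritten in terms of $\omega_{n}$ the Vandermonde structure makes the computation routine. The one place I would be careful is the bookkeeping that ties the two hypotheses to the two distinctness facts: the column condition modulo $d/m$ is exactly what forces the Vandermonde nodes to be distinct, while the condition $t\le d/m$ is exactly what keeps the chosen rows from collapsing modulo $d$. Getting both of these to line up is the only delicate point.
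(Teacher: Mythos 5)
Your proof is correct and follows essentially the same route as the paper: both arguments reduce to a $\min\{s,t\}\times\min\{s,t\}$ submatrix that is, after pulling out nonzero factors, a Vandermonde matrix in the nodes $\omega_{d/m}^{j_{l}}$, which are distinct precisely by the hypothesis $j_{l}\neq j_{k}\bmod\frac{d}{m}$, so its determinant is nonzero. The only cosmetic difference is that you strip the column prefactors via an invertible diagonal scaling before taking the square submatrix, while the paper absorbs them directly into the determinant computation.
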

The proof of Lemma \ref{properties on DFT} is given in Appendix \ref{Prove properties on DFT}.
Since $F$ is symmetric, a similar property can be obtained if one interchanges indices of the rows with that of columns of $M$ in Eq.(\ref{M-(i)}). Lemma \ref{properties on DFT} means that $M$ in Eq.(\ref{M-(i)}) is a row full rank matrix or a column full rank matrix.

\section{Uncertainty diagram of DFT}

De Bi\`{e}vre \cite{Bievre.2021} has shown that the points on the hyperbola $n_{\mathcal {A}}(\psi)n_{\mathcal {B}}(\psi) = d$ belong to UNDC$(\mathcal {A}, \mathcal {B})$ of the DFT matrix $F$. He \cite{Bievre.2022} also showed that $\mathcal {A}$ and $\mathcal {B}$ are completely incompatible if and only if UNDC$(\mathcal {A}, \mathcal {B})=\{(n_{\mathcal {A}},n_{\mathcal {B}})|n_{\mathcal {A}}+n_{\mathcal {B}}\geq d+1,\mathcal {A}, \mathcal {B}\in \mathbb{Z}_{d+1}^{*}\}$. However, it is unclear if $\mathcal {A}$ and $\mathcal {B}$ are not completely incompatible. In this section, we will continue to explore UNDC$(\mathcal {A}, \mathcal {B})$ of $F$.

The UNDC$(\mathcal {A}, \mathcal {B})$ of $F$ is symmetric since $F$ is symmetric \cite{Bievre.2022}. That is, $(n_{\mathcal {A}},n_{\mathcal {B}})\in$ UNDC$(\mathcal {A}, \mathcal {B})$ of $F$ if and only if $(n_{\mathcal {B}},n_{\mathcal {A}})\in$ UNDC$(\mathcal {A}, \mathcal {B})$ of $F$.

\begin{theorem}
\label{UNCD-main}
Suppose $m|d$ and $n\neq 0$. A point $(d-n,n_{\mathcal {B}})$ belongs to UNDC$(\mathcal {A}, \mathcal {B})$ of $F$ if $m|n$ and $\frac{n}{m} < n_{\mathcal {B}} \leq \frac{d}{m}$.
\end{theorem}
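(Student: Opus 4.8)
The plan is to apply Lemma~\ref{exist_mine} to the point $(d-n,n_{\mathcal B})$: since here $n_{\mathcal A}=d-n$ we have $d-n_{\mathcal A}=n$, so it suffices to exhibit one $n\times n_{\mathcal B}$ submatrix $M$ of $F$ obeying conditions (i)--(iii). Put $p=\frac{d}{m}$ and $q=\frac{n}{m}$; the latter is an integer because $m\mid n$, and the hypotheses translate into $1\le q<n_{\mathcal B}\le p$ (so in particular $p\ge 2$). I would take the columns of $M$ to be the first $n_{\mathcal B}$ multiples of $m$, namely $T=\{0,m,2m,\ldots,(n_{\mathcal B}-1)m\}$, which is legitimate because $\mathbb{Z}_d$ contains exactly $p\ge n_{\mathcal B}$ multiples of $m$; and the $n=qm$ rows to be $R=\{i\in\mathbb{Z}_d:\ i\bmod p\in\{0,1,\ldots,q-1\}\}$, a union of $q$ residue classes modulo $p$. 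Let $M$ be the corresponding submatrix of $F$.

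The computational heart is a collapse of $F$ on the chosen columns. For $j=j'm\in T$ and any row $i$ one has $F_{ij}=\frac{1}{\sqrt d}\omega_d^{\,i j' m}=\frac{1}{\sqrt d}\,\omega_p^{\,(i\bmod p)\,j'}$, where $\omega_p:=\omega_d^{\,m}=e^{2\pi\sqrt{-1}/p}$. Hence every entry of $M$ depends on the row only through $i\bmod p$, so the rows of $M$ lying in the same residue class modulo $p$ are identical and $M$ has exactly $q$ distinct rows, indexed by the residues $0,1,\ldots,q-1$. Consequently $\mathrm{Rank}(M)$ equals the rank of the $q\times n_{\mathcal B}$ matrix $X$ formed by those distinct rows, and---discarding the common prefactor, which affects no rank---$X$ together with $T':=\{0,1,\ldots,n_{\mathcal B}-1\}$ is a submatrix of the $p$-point DFT whose row block $\{0,\ldots,q-1\}$ and column block $T'$ are both sets of consecutive integers, i.e.\ arithmetic progressions of step $1$.

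With this reduction the three conditions become rank counts supplied by Lemma~\ref{properties on DFT} (applied to the $p$-dimensional DFT, i.e.\ with divisor $1$) and its row--column symmetric form. For (i), taking the rows as the progression and $T'$ as the distinct-modulo-$p$ columns gives $\mathrm{Rank}(M)=\mathrm{Rank}(X)=\min\{q,n_{\mathcal B}\}=q<n_{\mathcal B}$. For (iii), deleting any column leaves $n_{\mathcal B}-1\ge q$ columns still distinct modulo $p$, so the rank is unchanged, $\mathrm{Rank}(M'')=\min\{q,n_{\mathcal B}-1\}=q$. For (ii), any row that can be adjoined lies in $\mathbb{Z}_d\setminus R$ and therefore carries a residue $c_\ast\in\{q,\ldots,p-1\}$ not yet present; the augmented matrix then has $q+1$ distinct rows indexed by distinct residues mod $p$, and now reading $T'$ as the progression and the rows as the distinct-modulo-$p$ set, the symmetric form of Lemma~\ref{properties on DFT} yields rank $\min\{q+1,n_{\mathcal B}\}=q+1=\mathrm{Rank}(M)+1$. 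Thus $M$ satisfies (i)--(iii) and Lemma~\ref{exist_mine} places $(d-n,n_{\mathcal B})$ in UNDC$(\mathcal{A},\mathcal{B})$ of $F$.

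The step I expect to require the most care is the simultaneous verification of (ii) and (iii): Lemma~\ref{properties on DFT} demands that one of the two index sets be a genuine arithmetic progression, yet deleting a column or adjoining a row breaks the progression structure on that side. The device that makes all three cases go through is to choose \emph{both} the surviving rows $\{0,\ldots,q-1\}$ and the reduced columns $T'=\{0,\ldots,n_{\mathcal B}-1\}$ to be consecutive, so that in each sub-case the unmodified side can play the progression role while only ``distinct modulo $p$'' is asked of the side being changed. Finally I would dispose of the boundary cases to guarantee $p\ge 2$ and the strict inequalities $q+1\le n_{\mathcal B}$ and $n_{\mathcal B}-1\ge q$: the case $m=d$ admits no permissible pair $(n,n_{\mathcal B})$ and is vacuous, while $m=1$ makes the collapse trivial and lets Lemma~\ref{properties on DFT} act on $F$ itself.
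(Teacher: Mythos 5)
Your proof is correct and follows essentially the same route as the paper: you choose exactly the same submatrix (columns $\{0,m,\ldots,(n_{\mathcal B}-1)m\}$ and rows with residues $0,\ldots,\tfrac{n}{m}-1$ modulo $\tfrac{d}{m}$), exploit the same collapse of $M$ onto a $\tfrac{n}{m}\times n_{\mathcal B}$ Vandermonde-type block (the paper writes $M$ as $m$ stacked copies of $N$, you phrase it as a reduction to the $p$-point DFT, which is the same observation), and verify conditions (i)--(iii) of Lemma~\ref{exist_mine} via Lemma~\ref{properties on DFT} and its transposed form just as the paper does. The only differences are presentational, e.g.\ your explicit treatment of the boundary cases and correct attention to the ``for all adjoined rows/deleted columns'' quantifiers.
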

\begin{proof}
In order to show $(d-n,n_{\mathcal {B}})\in$ UNDC$(\mathcal {A}, \mathcal {B})$, we only need to find a $n \times n_{\mathcal {B}}$ submatrix that satisfies Lemma \ref{exist_mine}. First of all, let
\begin{eqnarray}
\label{}
N=&F\left(
\begin{array}{cccc}
 0, 1, \cdots, \frac{n}{m}-1;\\
 0, m, \cdots, (n_{\mathcal {B}}-1)m.
\end{array}
\right)
=\left(
\begin{array}{c}
\omega_{d}^{ikm}
\end{array}
\right)_{\frac{n}{m}\times n_{\mathcal {B}}},\nonumber
\end{eqnarray}
where $i \in \mathbb{Z}_{\frac{n}{m}}$ and $k \in \mathbb{Z}_{n_{\mathcal {B}}}$. Obviously, $N$ be a $\frac{n}{m}\times n_{\mathcal {B}}$ submatrix of $F$. Since $\frac{n}{m} < n_{\mathcal {B}}$, $N$ is a Vandermonde Matrix that is a row full rank matrix by Lemma \ref{properties on DFT}.
The matrix $N$ has the following two properties.

(i) If row $i_{1} \in \{\frac{n}{m},\frac{n}{m}+1,\ldots,\frac{d}{m}-1\}$ of $F$ is added to $N$ to obtain submatrix $N'$, then Rank($N'$)=Rank($N$)+1=$\frac{n}{m}+1$ by Lemma \ref{properties on DFT}. It is because $N'$ is still a Vandermonde matrix and  $\frac{n}{m} +1 \leq n_{\mathcal {B}}$ and $\omega_{d}^{im}\neq \omega_{d}^{i'm}$, where $i',i\in \mathbb{Z}_{\frac{n}{m}}\bigcup \{i_{1}\}$, $i'\neq i$.

(ii) If a column of $N$ is removed to obtain submatrix $N''$, then Rank($N''$)=Rank($N$)$=\frac{n}{m}$ by Lemma \ref{properties on DFT} and $\frac{n}{m} \leq n_{\mathcal {B}}-1$.

Secondly, consider the following $n\times n_{\mathcal {B}}$ submatirx
\begin{eqnarray}
M&=&F\left(
\begin{array}{cccc}
 \cdots,\frac{d}{m}i,\frac{d}{m}i+1,\cdots, \frac{d}{m}i+\frac{n}{m}-1,\cdots;\\
 0, m, \cdots, (n_{\mathcal {B}}-1)m.
\end{array}
\right),\nonumber\\
&=&\left(
\begin{array}{c}
N \\
N \\
\vdots\\
N
\end{array}
\right), i= 0,1,\ldots,m-1.
\label{M-d/m}
\end{eqnarray}
The equality in Eq.(\ref{M-d/m}) holds due to $\omega_{d}^{(\frac{d}{m}i+j)\times km}=\omega_{d}^{j\times km}$. It follows
Rank($M$)=Rank($N$)=$\frac{n}{m} < n_{\mathcal {B}}$.

Since $N$ has above two properties and $M$ has the form in Eq.(\ref{M-d/m}),
we have Rank$(M')=\frac{n}{m}+1=$Rank$(M)+1$ and Rank$(M'')=\frac{n}{m}=$Rank$(M)$, where $M'$ is obtained by adding a new row $\frac{d}{m}j+i_{1}$ to $M$, $i_{1} \in \mathbb{Z}_{\frac{d}{m}}\backslash \mathbb{Z}_{\frac{n}{m}}$ and $M''$ is obtained by removing a column of $M$. Here we employ the condition $\omega_{d}^{(\frac{d}{m}j+i_{1})km}=\omega_{d}^{i_{1}km}$.
So $M$ is the required submatrix. By Lemma \ref{exist_mine}, the required result is obtained.
\end{proof}

Notice that $n\neq 0$. It means Theorem \ref{UNCD-main} cannot work for $(d,n_{\mathcal {B}})$. However, taking $m=1$ and $n_{\mathcal {B}}=d$ in Theorem \ref{UNCD-main}, we obtain $(i, d)\in$UNDC$(\mathcal {A}, \mathcal {B})$, where $i\in \mathbb{Z}_{d}^{*}$. By the symmetry of UNDC$(\mathcal {A}, \mathcal {B})$ of $F$, we have $(d, i)\in$UNDC$(\mathcal {A}, \mathcal {B})$, where $i\in \mathbb{Z}_{d}^{*}$. In addition, $(d,d)\in$UNDC$(\mathcal {A}, \mathcal {B})$ by Lemma \ref{exist_Bievre}.

Taking $m=1$ and $n_{\mathcal {A}}=d-n$, we have the following result by Theorem \ref{UNCD-main} and discussions above.
\begin{corollary}
\label{existance on d+1}
A point $(n_{\mathcal {A}}, n_{\mathcal {B}})\in$ UNDC$(\mathcal {A}, \mathcal {B})$ of $F$ if $n_{\mathcal {A}}+n_{\mathcal {B}}\geq d+1$.
\end{corollary}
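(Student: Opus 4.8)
The plan is to derive the corollary directly from Theorem \ref{UNCD-main} by specializing $m=1$, and then to patch in the boundary cases $n_{\mathcal{A}}=d$ or $n_{\mathcal{B}}=d$ that the theorem excludes through its hypothesis $n\neq 0$.

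First I would set $m=1$ in Theorem \ref{UNCD-main}. Since $1\mid d$ and $1\mid n$ hold automatically, the theorem reads: $(d-n,n_{\mathcal{B}})\in$ UNDC$(\mathcal{A},\mathcal{B})$ of $F$ whenever $n\neq 0$ and $n<n_{\mathcal{B}}\leq d$. Writing $n_{\mathcal{A}}=d-n$, the standing assumption $n\neq 0$ becomes $n_{\mathcal{A}}\neq d$, while $n<n_{\mathcal{B}}$ rewrites as $d-n_{\mathcal{A}}<n_{\mathcal{B}}$, i.e. $n_{\mathcal{A}}+n_{\mathcal{B}}\geq d+1$; the upper bound $n_{\mathcal{B}}\leq d$ is met by every admissible $n_{\mathcal{B}}\in\mathbb{Z}_{d+1}^{*}$. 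Hence for every point with $n_{\mathcal{A}}\neq d$ and $n_{\mathcal{A}}+n_{\mathcal{B}}\geq d+1$ the conclusion already holds.

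It then remains to cover the edge $n_{\mathcal{A}}=d$ and, by symmetry, $n_{\mathcal{B}}=d$. For this I would invoke the three facts assembled in the discussion immediately preceding the corollary: taking $m=1$ and $n_{\mathcal{B}}=d$ in Theorem \ref{UNCD-main} gives $(i,d)\in$ UNDC for every $i\in\mathbb{Z}_{d}^{*}$; the symmetry of UNDC$(\mathcal{A},\mathcal{B})$ of $F$ then yields $(d,i)\in$ UNDC for every $i\in\mathbb{Z}_{d}^{*}$; and Lemma \ref{exist_Bievre} supplies the corner $(d,d)$. A point with $n_{\mathcal{A}}=d$ and $n_{\mathcal{A}}+n_{\mathcal{B}}\geq d+1$ has $n_{\mathcal{B}}\in\{1,\dots,d\}$, so it is either one of the points $(d,i)$ with $i\in\mathbb{Z}_{d}^{*}$ or the corner $(d,d)$; interchanging the roles of the two indices settles the case $n_{\mathcal{B}}=d$ in the same way.

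Since every point $(n_{\mathcal{A}},n_{\mathcal{B}})$ with $n_{\mathcal{A}}+n_{\mathcal{B}}\geq d+1$ falls into one of these cases, the corollary follows. The argument is essentially a bookkeeping assembly rather than a new construction, so I anticipate no genuine obstacle; the only care needed is to confirm that the specialization $m=1$ really does collapse the divisibility hypotheses of Theorem \ref{UNCD-main} to triviality, and that the excluded corner and the two edges are all accounted for by the auxiliary facts.
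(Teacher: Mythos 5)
Your proposal is correct and matches the paper's own argument: the paper likewise obtains the corollary by setting $m=1$, $n_{\mathcal{A}}=d-n$ in Theorem \ref{UNCD-main} and then covers the excluded edge points $(d,i)$, $(i,d)$, and $(d,d)$ via the symmetry of UNDC$(\mathcal{A},\mathcal{B})$ and Lemma \ref{exist_Bievre}, exactly as in the discussion you cite. The bookkeeping in your specialization ($1\mid d$, $1\mid n$ trivially; $d-n_{\mathcal{A}}<n_{\mathcal{B}}\leq d$ becoming $n_{\mathcal{A}}+n_{\mathcal{B}}\geq d+1$) is accurate, so no gap remains.
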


Note that in Corollary \ref{existance on d+1}, $n_{\mathcal {A}}$ can run over set $\mathbb{Z}_{d+1}^{*}$ due to the above discussion of Corollary \ref{existance on d+1}. Corollary \ref{existance on d+1} means that all the points above and on the line segment $n_{\mathcal {A}}+n_{\mathcal {B}}= d+1$ do exist whether $\mathcal {A}$ and $\mathcal {B}$ are completely incompatible or not. It implies that there is no ``hole" in the region of the $(n_{\mathcal {A}}, n_{\mathcal {B}})$-plane above and on the line $n_{\mathcal {A}}+n_{\mathcal {B}}\geq d+1$ for any $d$, that is, there is no absence of states with $(n_{\mathcal {A}}(\psi), n_{\mathcal {B}}(\psi))$ in the region. The absence of states lies strictly above the hyperbola of $n_{\mathcal {A}}(\psi)n_{\mathcal {B}}(\psi) = d$ and strictly below the line $n_{\mathcal {A}}+n_{\mathcal {B}}= d+1$. This is illustrated in Fig.\ref{Fig1}. The following theorems will show where the holes are.

\begin{theorem}
\label{nonexistance-nb=2}
A point $(d-n,2)$ belongs to UNDC$(\mathcal {A}, \mathcal {B})$ of $F$ if and only if $n=0$ or $n | d$ but $n\neq d$.
\end{theorem}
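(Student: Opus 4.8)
The plan is to characterize membership of the point $(d-n,2)$ in UNDC$(\mathcal{A},\mathcal{B})$ via Lemma~\ref{exist_mine}, reducing the question to the existence of a suitable $n\times 2$ submatrix $M$ of the DFT matrix $F$. By Lemma~\ref{exist_mine}, $(d-n,2)\in$ UNDC$(\mathcal{A},\mathcal{B})$ if and only if there is an $n\times 2$ submatrix with Rank$(M)<2$ (i.e. rank exactly $1$, since $M\neq 0$), such that adding any further row raises the rank to $2$ (condition (ii)), while deleting one of its two columns keeps the rank at $1$ (condition (iii)). Condition (iii) is automatic here, because a single nonzero column of a DFT submatrix always has rank $1$. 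So the heart of the matter is pinning down exactly when $F$ contains an $n\times 2$ rank-one submatrix whose rows cannot all be extended — i.e. a maximal rank-one configuration of size $n$.

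First I would analyze the rank-one condition directly. A $2$-column submatrix with rows indexed by $i\in I$ and columns $j_0,j_1$ has entries $\tfrac{1}{\sqrt d}(\omega_d^{i j_0},\omega_d^{i j_1})$; rank one means all these row vectors are proportional, equivalently $\omega_d^{i(j_1-j_0)}$ is constant over $i\in I$. Writing $c=j_1-j_0 \bmod d$, this says $\omega_d^{ic}$ is the same for every $i\in I$, i.e. $(i-i')c\equiv 0 \pmod d$ for all $i,i'\in I$. If I set $g=\gcd(c,d)$ and $n'=d/g$, this forces all indices in $I$ to lie in a single residue class modulo $n'$; the full such class has exactly $g$ elements. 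Hence the maximal rank-one $2$-column submatrices have row count equal to a divisor $g\mid d$, and taking $n=g$ gives precisely the rows $\{i_0,i_0+n',\dots\}$ forming a complete residue class. This is exactly where Lemma~\ref{properties on DFT} enters: it certifies that adding any row outside the class makes the two chosen columns distinguishable (the relevant $\omega_d$-powers become unequal), pushing the rank to $2$ and verifying condition (ii).

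Next I would assemble the two directions. For sufficiency, given $n\mid d$ with $n\neq d$, I exhibit the explicit submatrix: take $c$ with $\gcd(c,d)=n$ (so $d/n=n'$ and the residue class modulo $n'$ has exactly $n$ elements), let $I$ be that full class of size $n$, and pick the two columns differing by $c$. Conditions (i) and (iii) hold as noted, and condition (ii) follows from Lemma~\ref{properties on DFT} since appending an external row yields a $2\times 2$ Vandermonde-type block of rank $2$. The case $n=0$ corresponds to $(d,2)$, which lies on the line $n_{\mathcal{A}}+n_{\mathcal{B}}=d+2\geq d+1$ and is already handled by Corollary~\ref{existance on d+1}. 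For necessity, I argue the contrapositive: if $(d-n,2)\in$ UNDC with $0<n<d$, then Lemma~\ref{exist_mine} supplies a maximal rank-one $n\times 2$ submatrix, and the proportionality analysis above forces the $n$ rows to be a full residue class modulo $d/\gcd(c,d)$, whence $n=\gcd(c,d)$ divides $d$.

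The main obstacle I anticipate is the necessity direction, specifically showing that a rank-one $n\times 2$ submatrix satisfying the maximality condition (ii) must have its rows form a \emph{complete} residue class (so that $n$ is forced to be exactly a divisor of $d$, not merely bounded by one). The subtlety is that rank one only constrains the row indices to lie \emph{within} a residue class modulo $n'=d/\gcd(c,d)$, while condition (ii) — no row can be added without raising the rank — is what upgrades this to the class being \emph{full}. I would make this precise by observing that if $I$ were a proper subset of its residue class, any remaining class member $i_1$ satisfies $\omega_d^{i_1 c}=\omega_d^{ic}$ for $i\in I$, so appending row $i_1$ keeps the two columns proportional and leaves the rank at $1$, contradicting condition (ii). Thus $I$ exhausts the class, $|I|=n=\gcd(c,d)\mid d$, completing the proof.
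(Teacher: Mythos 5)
Your proposal is correct and takes essentially the same route as the paper's proof: both reduce the question to Lemma~\ref{exist_mine}, note that rank one forces the row indices into a single residue class modulo $d/\gcd(j_1-j_0,d)$, and use the maximality condition (ii) to force that class to be complete, giving $n=\gcd(j_1-j_0,d)\mid d$ (with $n\neq d$ since the two columns are distinct). The only cosmetic difference is that for sufficiency the paper simply cites Theorem~\ref{UNCD-main} with $m=n$ (plus the discussion above Corollary~\ref{existance on d+1} for $n=0$), whereas you rebuild the identical residue-class submatrix directly and verify the three conditions by hand.
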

\begin{proof}
Sufficiency can be obtained by taking $n=m$ in Theorem \ref{UNCD-main} and by discussion above Corollary \ref{existance on d+1} for $n=0$. We now show the necessary. Since $(d,2)$ and $(d-1, 2)$ belong to UNDC$(\mathcal {A}, \mathcal {B})$, we have $n=0,1$, respectively. Then we consider $n\geq 2$. A point $(d-n, 2)$ belongs to UNDC$(\mathcal {A}, \mathcal {B})$ of $F$. It means there exists a $n \times 2$ submatrix
\begin{eqnarray}
\label{}
M=&\left(
\begin{array}{cccc}
 \omega_{d}^{i_{0}j} & \omega_{d}^{i_{0}k}\\
 \cdots & \cdots\\
 \omega_{d}^{i_{n-1}j} & \omega_{d}^{i_{n-1}k}
\end{array}
\right).\nonumber
\end{eqnarray}
satisfying Lemma \ref{exist_mine}. Then we have Rank$(M)$=1. It means that $\omega_{d}^{(i_{t}-i_{s})j}=\omega_{d}^{(i_{t}-i_{s})k}$ for any $s,t\in \mathbb{Z}_{n}$ and $s\neq t$. It follows that $(i_{t}-i_{s})(k-j) \equiv 0 \mod d$. Assume $\gcd(k-j, d)=p$. Then $i_{s}\equiv i_{t} \mod \frac{d}{p}$. That is, $i_{s}$ is in the congruence class of $i_{t}$ modulo $\frac{d}{p}$. Notice that the cardinality of the congruence class of $i_{t}$ modulo $\frac{d}{p}$ is $p$. It implies that there are at most $p$ rows in submatrix $M$ by the arbitrariness of $s,t\in \mathbb{Z}_{n}$, i.e., $n \leq p$. In fact, the submatrix $M$ has only $p$ rows, i.e., $n = p$. Otherwise, $M$ cannot satisfy the second condition of Lemma \ref{exist_mine}. Thus, $n | d$ since $p | d$. However, $n=p\neq d$ since $j\neq k \mod d$.
\end{proof}

Note that for UNDC$(\mathcal {A}, \mathcal {B})$ of $F$, $n_{\mathcal {A}}(\psi)n_{\mathcal {B}}(\psi)\geq d$ for any state $|\psi\rangle \in \mathcal {H}$. Thus $n=d$ is meaningless in Theorem \ref{nonexistance-nb=2}. It means that a point $(d-n,2)\notin$ UNDC$(\mathcal {A}, \mathcal {B})$ of $F$ if and only if $n \nmid d$. In fact, we only consider the case $n\leq \frac{d}{2}$ since $(d-n)\times 2 \geq d$. For example, if $d=6$, there is no hole for $n_{\mathcal {B}}=2$ since $n=1,2,3$ are all the divisors of six. See panel (a) of Fig.\ref{Fig1}. If $d=8$, there is a hole $(5,2)$ since $n=3 \nmid 8$. See panel (b) of Fig.\ref{Fig1}.

\begin{theorem}
\label{nonexistance-nb=3}
Suppose $d$ has only nontrivial prime divisors. Then a point $(d-n,3)$ belongs to UNDC$(\mathcal {A}, \mathcal {B})$ of $F$ if and only if $n=0$ or there exists a divisor $m$ of $d$ such that $3m \leq d$ and  $n=m$ or $n=2m$.
\end{theorem}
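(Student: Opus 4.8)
The plan is to prove the two implications separately, with sufficiency essentially immediate and necessity carrying the content. For sufficiency, suppose $m\mid d$, $3m\le d$, and $n=m$ or $n=2m$. Then $m\mid n$ and $\tfrac{n}{m}\in\{1,2\}<3\le \tfrac{d}{m}$, so Theorem \ref{UNCD-main} with $n_{\mathcal B}=3$ gives $(d-n,3)\in$ UNDC$(\mathcal A,\mathcal B)$ of $F$; the case $n=0$ is the point $(d,3)$, already shown to lie in UNDC$(\mathcal A,\mathcal B)$ in the discussion preceding Corollary \ref{existance on d+1}.

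For necessity I would invoke Lemma \ref{exist_mine}: if $(d-n,3)\in$ UNDC with $n\neq 0$, there is an $n\times 3$ submatrix $M$ of $F$, with rows $i_0,\dots,i_{n-1}$ and columns $j_0,j_1,j_2$, satisfying (i) Rank$(M)<3$, together with (ii) and (iii). Factoring the nonzero scalar $\omega_d^{\,i_a j_0}$ out of each row changes none of the ranks in (i)--(iii), so I may replace $M$ by the matrix with rows $(1,x_a,y_a)$, where $x_a=\omega_d^{\,i_a p}$, $y_a=\omega_d^{\,i_a q}$, $p=j_1-j_0$, $q=j_2-j_0$; here $p,q\not\equiv 0$ and $p\not\equiv q \pmod d$ since the columns are distinct. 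The central object is the homomorphism $i\mapsto(\omega_d^{\,ip},\omega_d^{\,iq})$ on $\mathbb Z_d$: its kernel has order $g:=\gcd(p,q,d)$, so each image value is attained exactly $g$ times, and writing $d_1=d/g$, $p=gp_1$, $q=gq_1$, $\omega=\omega_d^{\,g}$ (a primitive $d_1$-th root of unity, with $\gcd(p_1,q_1,d_1)=1$), the image is the set of $d_1$ distinct points $(\omega^{sp_1},\omega^{sq_1})$, $s\in\mathbb Z_{d_1}$.

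I would then split on the rank. If Rank$(M)=1$, all normalized rows coincide, so all selected indices map to a single image point; by (ii) they are exactly its $g$ preimages, giving $n=g\mid d$. Since $p\not\equiv q$ rules out $g=d$ and $g=d/2$, one has $d/g\ge 3$, i.e. $3g\le d$, so $n=m$ with $m=g$. If Rank$(M)=2$, condition (iii) forces every pair of columns to be independent, so the unique column relation $c_0+c_1x+c_2y=0$ has all $c_k\neq 0$; hence the image points of the selected indices lie on the affine line $c_1x+c_2y=-c_0$, which avoids the origin. The crucial step is that this line meets the image in at most two points: taking moduli in $c_1x=-c_0-c_2y$ with $|x|=|y|=1$ pins $\mathrm{Re}(\bar c_0 c_2 y)$ to a constant, so $y$ takes at most two values on the unit circle, each forcing $x$, while injectivity of $s\mapsto(\omega^{sp_1},\omega^{sq_1})$ (from $\gcd(p_1,q_1,d_1)=1$) allots at most one index per point. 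Rank $2$ needs at least two distinct image points, so exactly two occur, and by (ii) the selected indices are their $2g$ preimages: $n=2g=2m$. The support uncertainty relation $(d-n)\cdot 3\ge d$ then yields $3m\le d$.

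I expect the rank-$2$ count — bounding by two the number of points of the cyclic subgroup $\{(\omega^{sp_1},\omega^{sq_1})\}$ that lie on a fixed complex-affine line — to be the main obstacle, and the modulus/collinearity argument above is what resolves it. The remaining work is bookkeeping: converting $n=g$ or $n=2g$ into the stated form using $g\mid d$, the bound $3g\le d$, and the divisibility structure of $d$ under the stated hypothesis on its prime divisors.
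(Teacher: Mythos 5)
Your proposal is correct, and it reaches the conclusion by a genuinely different route than the paper, even though both share the same skeleton: sufficiency via Theorem \ref{UNCD-main} plus the discussion above Corollary \ref{existance on d+1}, and necessity by analyzing the $n\times 3$ submatrix supplied by Lemma \ref{exist_mine}. The paper's necessity argument expands the column relation with coefficients $c_{1},c_{2}$ through Euler's formula, obtains $2c_{1}c_{2}\cos\frac{2\pi i_{k}(j_{1}-j_{2})}{d}=1-c_{1}^{2}-c_{2}^{2}$, splits on $c_{1}c_{2}=0$ versus $c_{1}c_{2}\neq 0$, and counts congruence classes modulo $d/p$ with $p=\gcd(j_{1}-j_{2},d)$; at a crucial step it uses the hypothesis that all nontrivial divisors of $d$ are prime to force $j_{0}\equiv j_{1}\equiv j_{2} \pmod p$. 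You instead split on $\mathrm{Rank}(M)\in\{1,2\}$, encode the normalized rows as points in the image of the homomorphism $i\mapsto(\omega_{d}^{ip},\omega_{d}^{iq})$ whose kernel has order $g=\gcd(p,q,d)$, bound by two the intersection of the affine line $c_{1}x+c_{2}y=-c_{0}$ with the torus via your modulus argument, and then use condition (ii) of Lemma \ref{exist_mine} to show the selected rows are exactly the $g$ (rank one) or $2g$ (rank two) preimages, getting $3g\leq d$ from $g\notin\{d,d/2\}$ in the first case and from the support uncertainty relation in the second. This buys two things: your argument never invokes the primality hypothesis, so it establishes the statement for arbitrary $d$ (consistent with the paper's remark that the conclusion persists for $d=8$); and it handles genuinely complex combination coefficients, whereas the paper's Eq.~(\ref{expand by Eular}) tacitly treats $c_{1},c_{2}$ as real when separating real and imaginary parts --- a gap your line--torus argument avoids. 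The paper's approach, in exchange, is more elementary and purely computational.
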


The proof of Theorem \ref{nonexistance-nb=3} is given in Appendix \ref{Prove nonexistance-nb=3}.
Theorem \ref{nonexistance-nb=3} presents where the holes are when $n_{\mathcal {B}}=3$. For instance, if $d=9$, $m=1,2$. Then $n=0,1,2,3,6$. It implies that $n_{\mathcal {A}}=9,8,7,6,3$ but $n_{\mathcal {A}}\neq 5, 4$. Thus there is two holes $(5,3)$ and $(4,3)$ for $n_{\mathcal {B}}=3$. See panel (c) of Fig.\ref{Fig1}. Panel (d) is for $d=10$. It should be noted that although $8$ has a nontrivial nonprime divisor, one can check that the result of Theorem \ref{nonexistance-nb=3} for $d=8$ still holds since the proof can be followed similar for the proof of Theorem \ref{nonexistance-nb=3}.

\begin{figure}[h]
\centering
\begin{overpic}[scale=0.28]{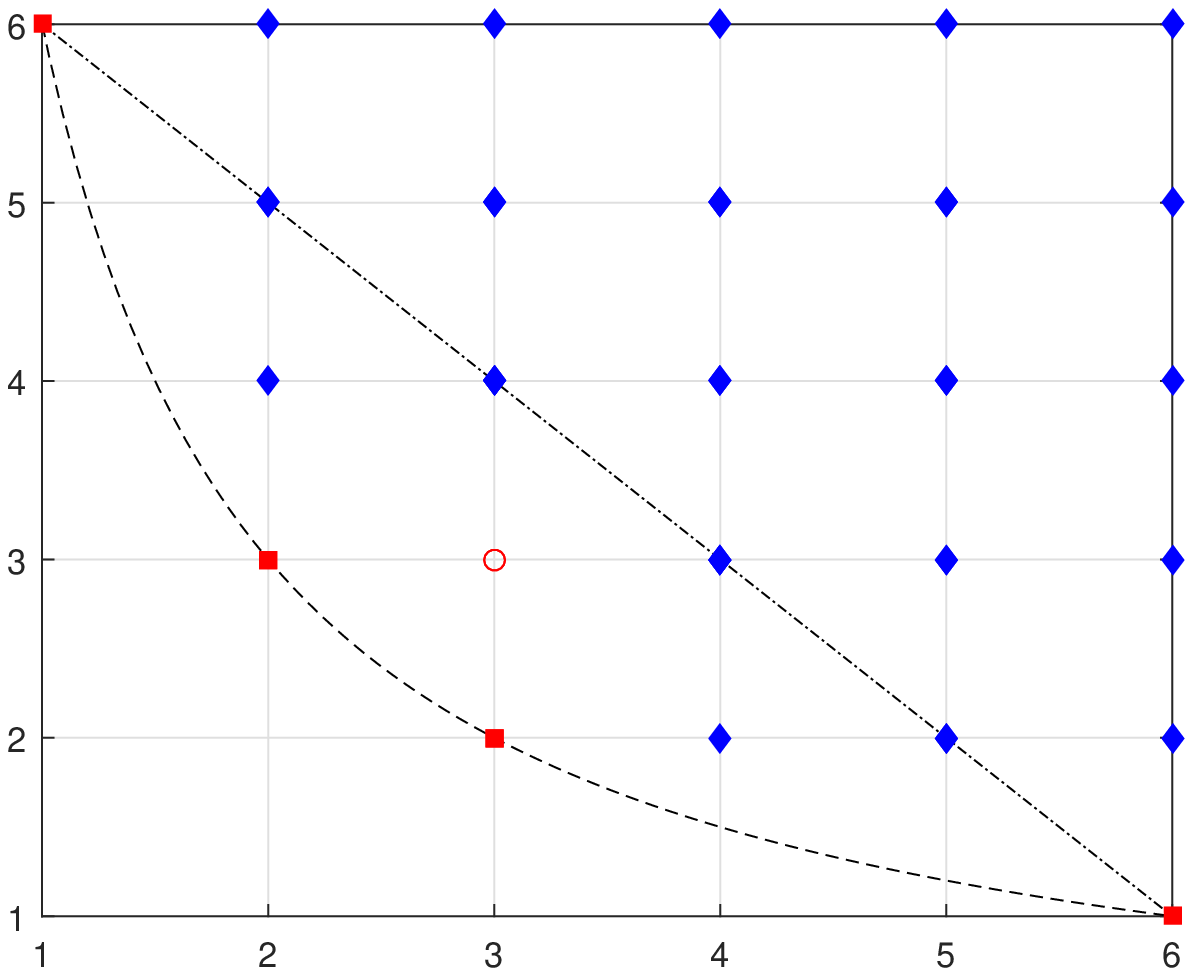}
    \put(12,12){\scriptsize {\bf (a): $d=6$}}
    \put(40,0){\scriptsize { $n_{\mathcal {A}}$}}
    \put(0,40){\scriptsize { $n_{\mathcal {B}}$}}
\end{overpic}
\begin{overpic}[scale=0.28]{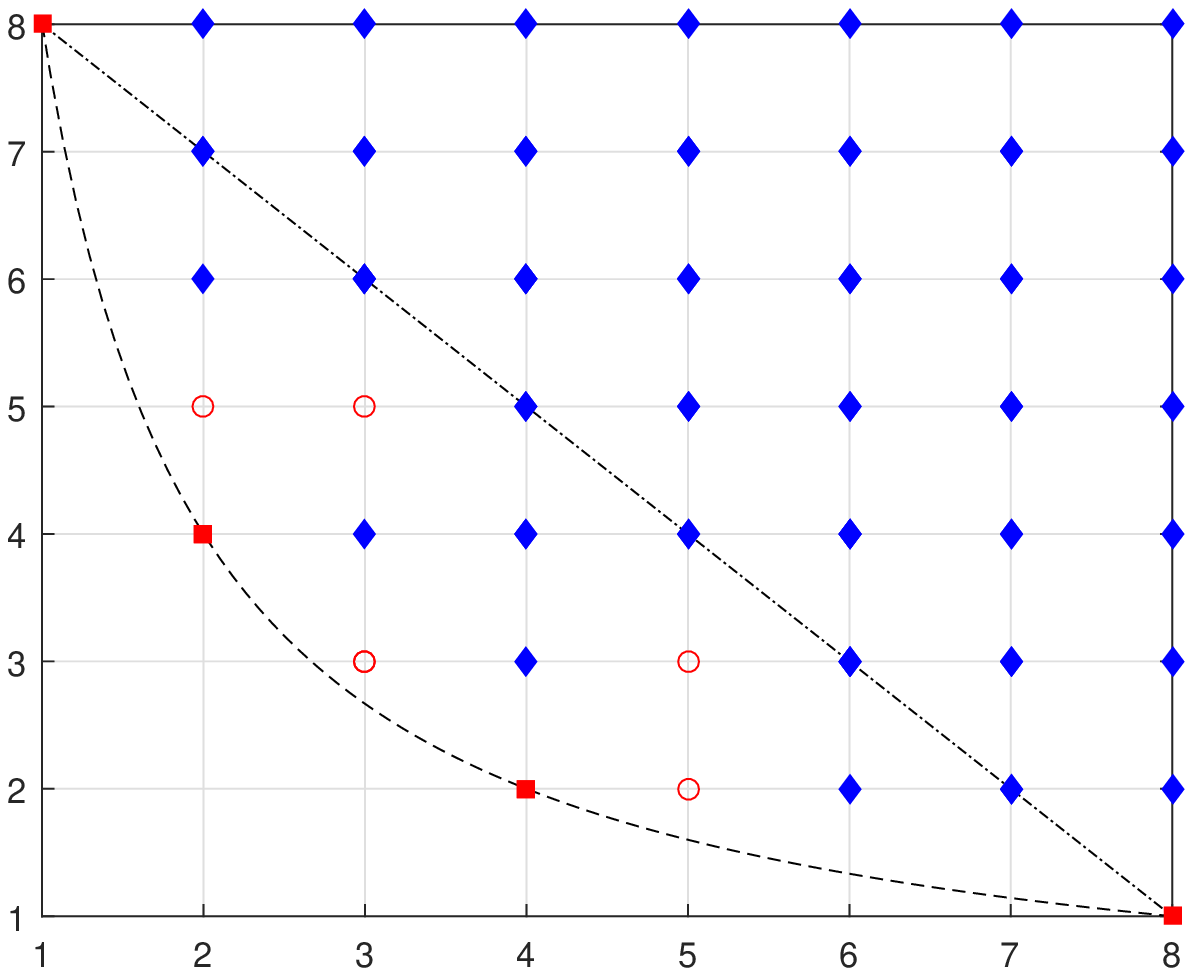}
    \put(12,12){\scriptsize {\bf (b): $d=8$}}
    \put(40,0){\scriptsize { $n_{\mathcal {A}}$}}
    \put(0,40){\scriptsize { $n_{\mathcal {B}}$}}
\end{overpic}
\begin{overpic}[scale=0.28]{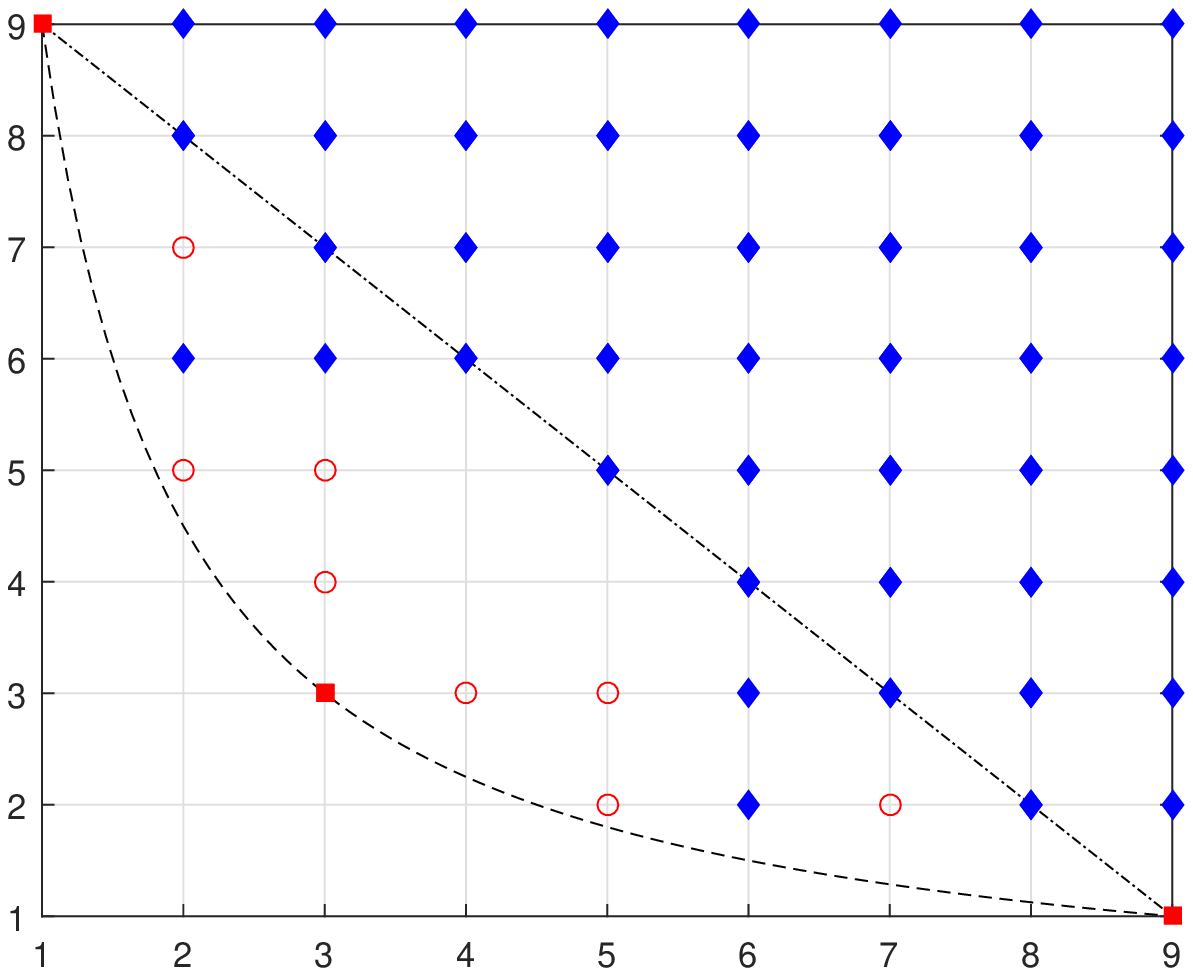}
    \put(12,12){\scriptsize {\bf (c): $d=9$}}
    \put(40,0){\scriptsize { $n_{\mathcal {A}}$}}
    \put(0,40){\scriptsize { $n_{\mathcal {B}}$}}
\end{overpic}
\begin{overpic}[scale=0.28]{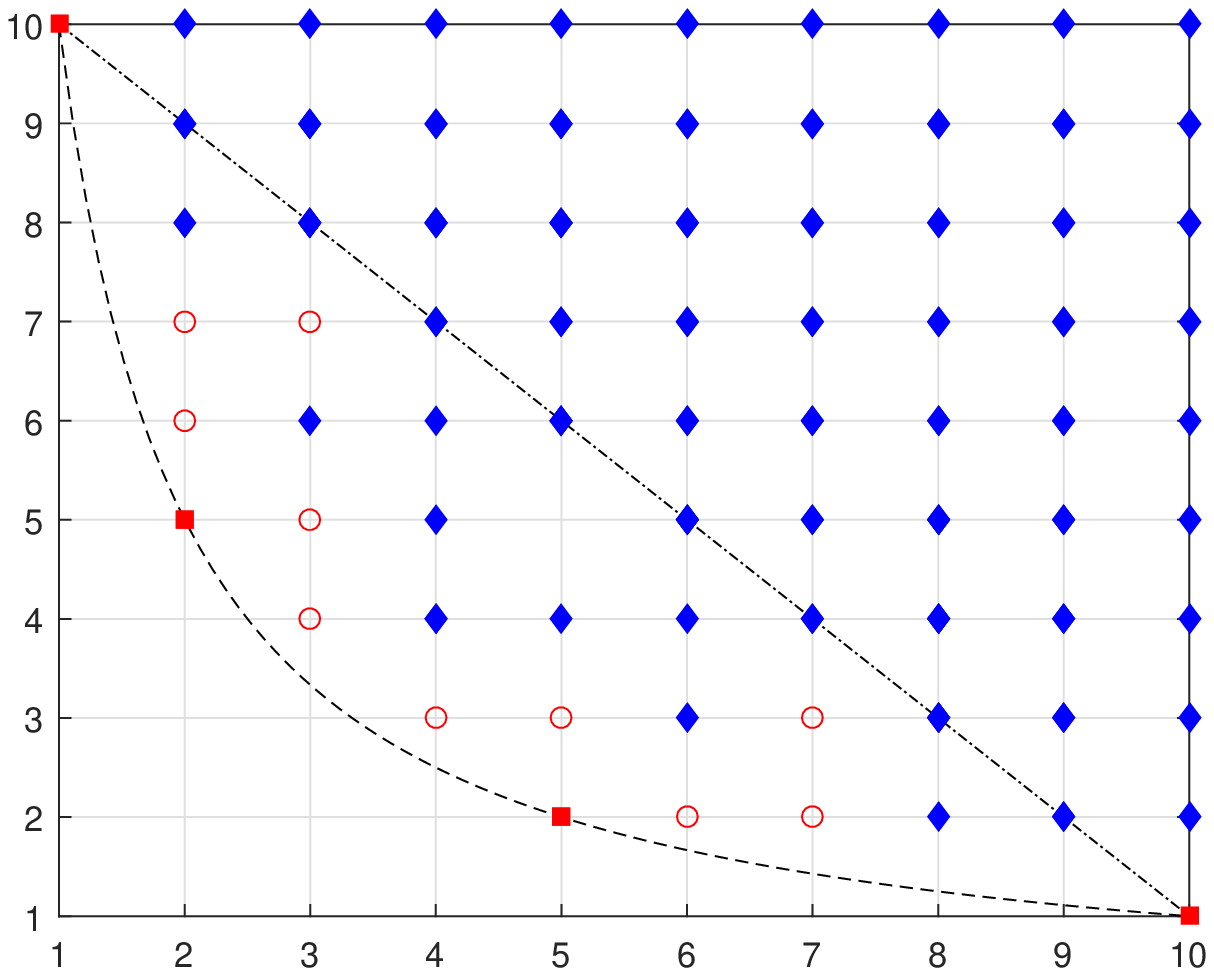}
    \put(12,12){\scriptsize {\bf (d): $d=10$}}
    \put(40,0){\scriptsize { $n_{\mathcal {A}}$}}
    \put(0,40){\scriptsize { $n_{\mathcal {B}}$}}
\end{overpic}
\caption{Uncertainty diagrams of the DFT matrix. (a): $d=6$; (b): $d=8$; (c): $d=9$; (d): $d=10$. Dashed curve is the hyperbola $n_{\mathcal {A}}(\psi)n_{\mathcal {B}}(\psi) = d$. Dot-dashed line is the line $n_{\mathcal {A}}+n_{\mathcal {B}}= d+1$. Red squares represent KD classical states. Blue diamonds represent KD nonclassical states. Red circles represent holes. Blanks mean the existence of the point in UNDC$(\mathcal {A}, \mathcal {B})$ is unclear.}
\label{Fig1}
\end{figure}

\section{KD nonclassicality on MUB}
In this section, we focus on the KD nonclassicality of a state based on MUBs. In Ref.\cite{Bievre.2021}, De Bi\`{e}vre gave a conjecture, that is, whether it is true that the only KD classical states for the DFT are the ones on the hyperbola $n_{\mathcal {A}}(\psi)n_{\mathcal {B}}(\psi) = d$ ? So we first consider the transition matrix of a pair of MUBs $\mathcal {A}$ and $\mathcal {B}$ is the DFT matrix $F$ and try to answer this question.

\begin{theorem}
\label{nonclassicality on DFT}
A state $|\psi\rangle \in \mathcal {H}$ is KD nonclassical if and only if $n_{\mathcal {A}}(\psi)n_{\mathcal {B}}(\psi) > d$. In other word, $|\psi\rangle$ is KD classical if and only if $n_{\mathcal {A}}(\psi)n_{\mathcal {B}}(\psi) = d$.
\end{theorem}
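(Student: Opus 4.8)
The plan is to convert KD (non)classicality for the DFT into a purely arithmetic condition on the two supports, and then analyze that condition with elementary subgroup theory in $\mathbb{Z}_d$.

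First I would write the KD distribution explicitly. Setting $\psi^{a}_{i}=\langle a_{i}|\psi\rangle$, $\psi^{b}_{j}=\langle b_{j}|\psi\rangle$ and using $\langle b_{j}|a_{i}\rangle=\overline{F_{ij}}=\frac{1}{\sqrt d}\omega_{d}^{-ij}$, one gets $Q_{ij}=\frac{1}{\sqrt d}\,\psi^{a}_{i}\,\overline{\psi^{b}_{j}}\,\omega_{d}^{-ij}$. This vanishes automatically when $i\notin S_{\psi}$ or $j\notin T_{\psi}$, while for $i\in S_{\psi}$, $j\in T_{\psi}$ both factors are nonzero, so $|\psi\rangle$ is KD classical iff $Q_{ij}>0$ on that block. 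Writing $\psi^{a}_{i}=r_{i}e^{\mathrm{i}\alpha_{i}}$ and $\psi^{b}_{j}=s_{j}e^{\mathrm{i}\beta_{j}}$ with $r_{i},s_{j}>0$ on the supports, classicality becomes the phase system $\alpha_{i}-\beta_{j}\equiv \tfrac{2\pi ij}{d}\pmod{2\pi}$ for all $i\in S_{\psi}$, $j\in T_{\psi}$. Eliminating the free phases (fix one $i_{0}\in S_{\psi}$ and one $j_{0}\in T_{\psi}$) shows this system is consistent iff the supports obey the \emph{difference condition} $(i-i')(j-j')\equiv 0\pmod d$ for all $i,i'\in S_{\psi}$, $j,j'\in T_{\psi}$. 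Thus KD classicality reduces to this difference condition, and the theorem reduces to showing the difference condition holds iff $n_{\mathcal A}(\psi)\,n_{\mathcal B}(\psi)=d$.

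For the direction ``classical $\Rightarrow$ product $=d$'' I would pass to the subgroups generated by the support differences. Let $H_{S}=\langle\{i-i':i,i'\in S_{\psi}\}\rangle=g_{S}\mathbb{Z}_{d}$ and $H_{T}=g_{T}\mathbb{Z}_{d}$, so $|H_{S}|=d/g_{S}$, $|H_{T}|=d/g_{T}$. Since $S_{\psi}$ sits in a single coset of $H_{S}$, one has $n_{\mathcal A}(\psi)\le d/g_{S}$, and likewise $n_{\mathcal B}(\psi)\le d/g_{T}$. Expanding arbitrary elements of $H_{S}$ and $H_{T}$ as integer combinations of differences, the difference condition forces $uv\equiv 0\pmod d$ for every $u\in H_{S}$, $v\in H_{T}$; taking $u=g_{S}$, $v=g_{T}$ gives $d\mid g_{S}g_{T}$, hence $g_{S}g_{T}\ge d$. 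Therefore $n_{\mathcal A}(\psi)\,n_{\mathcal B}(\psi)\le d^{2}/(g_{S}g_{T})\le d$, and combined with the MUB support uncertainty $n_{\mathcal A}(\psi)\,n_{\mathcal B}(\psi)\ge d$ this forces equality (and as a byproduct $g_{S}g_{T}=d$, with $S_{\psi}$ a full coset of $H_{S}$ and $T_{\psi}$ a full coset of $H_{S}^{\perp}=H_{T}$).

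For the converse ``product $=d\Rightarrow$ classical'' I would invoke the equality characterization of the support uncertainty principle: $n_{\mathcal A}(\psi)\,n_{\mathcal B}(\psi)=d$ is attained only by the ``picket-fence'' states, i.e.\ up to a global phase, a modulation and a translation, $|\psi\rangle$ is a uniform comb supported on a coset of a subgroup $H\le\mathbb{Z}_{d}$ with $|H|=n_{\mathcal A}(\psi)$, whose $\mathcal B$-representation is a comb supported on a coset of $H^{\perp}$. For such a state $i-i'\in H$ and $j-j'\in H^{\perp}$, so $(i-i')(j-j')\in H\cdot H^{\perp}\subseteq d\mathbb{Z}$, the difference condition holds, and a direct evaluation of $Q_{ij}$ confirms its phases solve the phase system, so $|\psi\rangle$ is classical. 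I would establish the needed rigidity either by citing the Donoho--Stark equality analysis \cite{Donoho.1989}, or self-containedly: the chain $\|\psi^{a}\|_{\infty}\le \tfrac{1}{\sqrt d}\|\psi^{b}\|_{1}\le \tfrac{1}{\sqrt d}\sqrt{n_{\mathcal B}}\,\|\psi^{b}\|_{2}=\tfrac{1}{\sqrt d}\sqrt{n_{\mathcal B}}\,\|\psi^{a}\|_{2}\le \tfrac{1}{\sqrt d}\sqrt{n_{\mathcal A}n_{\mathcal B}}\,\|\psi^{a}\|_{\infty}$ reproves $n_{\mathcal A}n_{\mathcal B}\ge d$, and tracing its equality cases forces $|\psi^{a}|$ constant on $S_{\psi}$ and $|\psi^{b}|$ constant on $T_{\psi}$, after which the Vandermonde rank facts of Lemma \ref{properties on DFT} pin down the coset structure. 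The trivial edge cases $n_{\mathcal A}(\psi)=1$ or $n_{\mathcal B}(\psi)=1$ are basis vectors, which are classical with $n_{\mathcal A}n_{\mathcal B}=d$, so they are handled at once. The main obstacle is precisely this converse: upgrading the tightness of the support uncertainty from ``flat magnitudes'' to the full comb/coset structure; the reduction to the difference condition and the subgroup bound are otherwise elementary.
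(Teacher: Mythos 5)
Your proposal is correct, and for the substantive direction (KD classical $\Rightarrow n_{\mathcal {A}}(\psi)n_{\mathcal {B}}(\psi)\leq d$) it follows the same route as the paper: positivity of $Q_{ij}$ on the support rectangle forces phase alignment, which yields $(i-i')(j-j')\equiv 0 \pmod d$ for all support indices, and one then bounds the support sizes by congruence counting. The difference lies in how the counting is done, and it is a difference worth keeping. The paper fixes a \emph{single} pair $(j_{s},j_{t})$, sets $p=\gcd(j_{t}-j_{s},d)$, and concludes both $n_{\mathcal {A}}\leq p$ and $n_{\mathcal {B}}\leq d/p$ ``by arbitrariness''; but $p$ is pair-dependent, and the congruences alone do not support the second bound. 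Concretely, for $d=12$ the KD classical state $\frac{1}{\sqrt 2}(|a_{0}\rangle+|a_{6}\rangle)$ has $S_{\psi}=\{0,6\}$, $T_{\psi}=\{0,2,4,6,8,10\}$, and satisfies all the derived congruences, yet the pair $(j_{s},j_{t})=(0,6)$ gives $p=6$, $d/p=2$, while $n_{\mathcal {B}}=6$: the inference pattern used in the paper fails on it. Your bookkeeping with $g_{S}=\gcd(\{i-i'\},d)$ and $g_{T}=\gcd(\{j-j'\},d)$ (the subgroups generated by \emph{all} differences), giving $n_{\mathcal {A}}\leq d/g_{S}$, $n_{\mathcal {B}}\leq d/g_{T}$ and $d\mid g_{S}g_{T}$, is exactly the repair this step needs, so your write-up is in fact tighter than the paper's own.

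Two caveats on your side. First, the sentence ``KD classicality reduces to this difference condition'' overstates what phase elimination gives: consistency of the phase system in \emph{free} unknowns $\alpha_{i},\beta_{j}$ does not mean the actual phases of $\psi$ solve it, so at that stage the difference condition is only \emph{necessary} for classicality. This is harmless, because you never use the false converse: you prove $n_{\mathcal {A}}(\psi)n_{\mathcal {B}}(\psi)=d\Rightarrow$ classical via the picket-fence rigidity instead. Second, that rigidity (saturation of $n_{\mathcal {A}}n_{\mathcal {B}}\geq d$ only by uniform combs on cosets of subgroups, whose $\mathcal {B}$-supports are cosets of the dual subgroup) is precisely what the paper outsources by citing \cite{Bievre.2021}, where it is proved; your self-contained sketch via equality cases of the $\ell^{\infty}$--$\ell^{1}$--$\ell^{2}$ chain yields flat magnitudes but not yet the coset structure, and the full statement is not clearly in \cite{Donoho.1989}. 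Citing \cite{Bievre.2021} for this direction is the cleaner choice, after which your proof and the paper's coincide there.
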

\begin{proof}
The necessity has been proved by De Bi\`{e}vre in Ref.\cite{Bievre.2021}. Here we only need to show the sufficiency, i.e., $|\psi\rangle \in \mathcal {H}$ is KD nonclassical if $n_{\mathcal {A}}(\psi)n_{\mathcal {B}}(\psi) > d$.

We proceed by contradiction. Suppose that $|\psi\rangle$ is KD classical, i.e., $\langle a_{i}|\psi\rangle\langle \psi|b_{j}\rangle \langle b_{j}|a_{i}\rangle \geq 0$ for any $i,j\in \mathbb{Z}_{d}$. Since KD distribution is insensitive to global phase rotations, we perform global phase rotations $|a_{i}\rangle \rightarrow e^{\sqrt{-1}\phi_{i}}|a_{i}\rangle$ and $|b_{j}\rangle| \rightarrow e^{\sqrt{-1}\phi_{j}}|b_{j}\rangle|$ such that $\langle a_{i}|\psi\rangle$ and $\langle \psi|b_{j}\rangle$ are nonnegative for $i,j\in \mathbb{Z}_{d}$. Reordering the basis vectors, we can suppose that $\langle a_{i_{m}}|\psi\rangle >0$ and $\langle \psi|b_{j_{s}}\rangle >0$ for $m\in \mathbb{Z}_{n_{\mathcal {A}}(\psi)}$ and $s\in \mathbb{Z}_{n_{\mathcal {B}}(\psi)}$, where $i_{m},j_{s}$ are initial indices of basis vectors $|a_{i}\rangle$ and $|b_{j}\rangle$, respectively. Thus, for the same range of $i_{m}$ and $j_{s}$, we have $\langle b_{j_{s}}|a_{i_{m}}\rangle \geq 0$. Since $F$ is the DFT matrix, we have $|F_{ij}|=|\langle a_{i}|b_{j}\rangle|=\frac{1}{\sqrt{d}}|\omega_{d}^{ij}|=\frac{1}{\sqrt{d}}$. It follows $\langle a_{i_{m}}| b_{j_{s}}\rangle= \frac{1}{\sqrt{d}}$ for $m\in \mathbb{Z}_{n_{\mathcal {A}}(\psi)}$ and $s\in \mathbb{Z}_{n_{\mathcal {B}}(\psi)}$. It means that the top left-hand block $V=(v_{ij})$ in the new transition matrix after reordering the basis vectors is a $n_{\mathcal {A}} \times n_{\mathcal {B}}$ submatrix with all entries of $\frac{1}{\sqrt{d}}$.

Let us first consider the trivial case. If $n_{\mathcal {A}}(\psi)=1$ (or $n_{\mathcal {B}}(\psi)=1$), then $n_{\mathcal {B}}(\psi)=d$ (or $n_{\mathcal {A}}(\psi)=d$) since $F$ is the DFT matrix. Thus $n_{\mathcal {A}}(\psi)n_{\mathcal {B}}(\psi) = d$. It is a contradiction with $n_{\mathcal {A}}(\psi)n_{\mathcal {B}}(\psi) > d$.

Next we will consider the case $n_{\mathcal {A}}(\psi)\geq2$ and $n_{\mathcal {B}}(\psi)\geq2$.
For any $m\in \mathbb{Z}_{n_{\mathcal {A}}(\psi)}$, any $s,t\in \mathbb{Z}_{n_{\mathcal {B}}(\psi)}$ and $s<t$, calculate the product of two numbers $\sqrt{d}v^{*}_{ms}$ and $\sqrt{d}v_{mt}$ in $V$. We have
\begin{equation}
\label{product of vv}
\begin{aligned}
1=&dv^{*}_{ms} \cdot v_{mt}\\
=&de^{-\sqrt{-1}(\phi_{j_{s}}-\phi_{i_{m}})}\langle b_{j_{s}}|a_{i_{m}}\rangle e^{\sqrt{-1}(\phi_{j_{t}}-\phi_{i_{m}})}\langle a_{i_{m}}|b_{j_{t}}\rangle\\
=&e^{\sqrt{-1}(\phi_{j_{t}}-\phi_{j_{s}})}\omega_{d}^{-i_{m}j_{s}}\omega_{d}^{i_{m}j_{t}}\\
=&\omega_{d}^{\alpha_{s,t}+i_{m}(j_{t}-j_{s})}.
\end{aligned}
\end{equation}
where $\alpha_{s,t}:=\frac{d}{2\pi}(\phi_{j_{t}}-\phi_{j_{s}})$. It implies that $\alpha_{s,t}+i_{m}(j_{t}-j_{s})\equiv 0 \mod d$. Notice that $\alpha_{s,t}$ is independent of $m$. Thus, for any $m, n\in \mathbb{Z}_{n_{\mathcal {A}}(\psi)}$ and $m<n$, we have
\begin{equation}
\label{}
\begin{aligned}
\alpha_{s,t}+i_{m}(j_{t}-j_{s}) &\equiv& 0 \mod d,  \\
\alpha_{s,t}+i_{n}(j_{t}-j_{s}) &\equiv& 0 \mod d.
\end{aligned}
\end{equation}
It follows that
\begin{equation}
\label{}
(i_{n}-i_{m})(j_{t}-j_{s}) \equiv 0 \mod d.
\end{equation}
Suppose gcd$(j_{t}-j_{s},d)=p$ and $q:=\frac{d}{p}$. If $p=1$, then $i_{n} \equiv i_{m} \mod d$. It is impossible since $m, n\in \mathbb{Z}_{n_{\mathcal {A}}(\psi)}$ and $m<n$.
If $p\neq 1$, then we have
\begin{equation}
\label{}
\begin{aligned}
i_{n} &\equiv& i_{m}  \mod q,  \\
j_{t} &\equiv& j_{s}  \mod p.
\end{aligned}
\end{equation}
It implies that $i_{n}$ is in the congruence class of $i_{m}$ modulo $q$ and the cardinality of the congruence class of $i_{m}$ is $p$. Similarly, $j_{t}$ is in the congruence class of $j_{s}$ modulo $p$, and the cardinality of the congruence class of $j_{s}$ is $q$. Because of the arbitrariness of $i_{n}, i_{m} \in \mathbb{Z}_{n_{\mathcal {A}}(\psi)}$ and $j_{t}, j_{s} \in \mathbb{Z}_{n_{\mathcal {B}}(\psi)}$, we obtain $n_{\mathcal {A}} \leq p$ and $n_{\mathcal {B}} \leq q$. Therefore, $n_{\mathcal {A}}(\psi)n_{\mathcal {B}}(\psi) \leq d$. It is a contradiction with $n_{\mathcal {A}}(\psi)n_{\mathcal {B}}(\psi) > d$.
\end{proof}

From the above proof we find that $n_{\mathcal {A}}(\psi)n_{\mathcal {B}}(\psi) \leq d$ if $|\psi\rangle$ is KD classical.  It follows that $n_{\mathcal {A}}(\psi)n_{\mathcal {B}}(\psi) = d$ since $n_{\mathcal {A}}(\psi)n_{\mathcal {B}}(\psi) \geq d$ for the DFT matrix. It implies that only the KD classical states lie on the hyperbola of $n_{\mathcal {A}}(\psi)n_{\mathcal {B}}(\psi) = d$. This result gives a positive answer to the conjecture in Ref.\cite{Bievre.2021}.

When $d$ is prime, the bases $\mathcal {A},\mathcal {B}$ are completely incompatible. Then all the states are nonclassical except for basis vectors \cite{Bievre.2021}. In Theorem \ref{nonclassicality on DFT}, the KD nonclassicality of a state based on the DFT matrix, whenever $d$ is prime or not, is completely characterized by the support uncertainty relation. See Fig.\ref{Fig1}.

For two mutually unbiased observables, the author in Ref.\cite{Bievre.2022} showed that all states, except for the eigenstates of the two observables, are KD nonclassical when two observables are completely incompatible. Next, we will consider the general case, that is, two observables are not necessarily completely incompatible.

\begin{theorem}
\label{nonclassicality on one}
Let $U$ be the unitary transition operator between $\mathcal {A}$ and $\mathcal {B}$ in a $d$ dimensional Hilbert space $\mathcal {H}$ and suppose that $|U_{ij}|=|\langle a_{i}|b_{j}\rangle|=\frac{1}{\sqrt{d}}$ for $i,j\in \mathbb{Z}_{d}$. $|\psi\rangle \in \mathcal {H}$ but not a basis vector is KD nonclassical if $n_{\mathcal {A}}(\psi) > \frac{d}{2}$ or $n_{\mathcal {B}}(\psi) > \frac{d}{2}$.
\end{theorem}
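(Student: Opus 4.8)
The plan is to prove the contrapositive: assuming that $|\psi\rangle$ is KD classical and not a basis vector, I would show that necessarily $n_{\mathcal{A}}(\psi)\le \frac{d}{2}$ \emph{and} $n_{\mathcal{B}}(\psi)\le \frac{d}{2}$, which contradicts either of the hypotheses $n_{\mathcal{A}}(\psi)>\frac{d}{2}$ or $n_{\mathcal{B}}(\psi)>\frac{d}{2}$. The first observation I would record is that $|\psi\rangle$ not being a basis vector forces $n_{\mathcal{A}}(\psi)\ge 2$ and $n_{\mathcal{B}}(\psi)\ge 2$, since $n_{\mathcal{A}}(\psi)=1$ (resp. $n_{\mathcal{B}}(\psi)=1$) would make $|\psi\rangle$ equal to some $|a_i\rangle$ (resp. $|b_j\rangle$).

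Next I would reduce the classicality hypothesis to a rigid structural statement about the transition matrix, exactly as in the proof of Theorem \ref{nonclassicality on DFT}. Writing $S_\psi,T_\psi$ for the supports of $|\psi\rangle$, I would apply global phase rotations $|a_i\rangle\mapsto e^{\sqrt{-1}\phi_i}|a_i\rangle$ and $|b_j\rangle\mapsto e^{\sqrt{-1}\phi_j}|b_j\rangle$ so that $\langle a_i|\psi\rangle\ge 0$ and $\langle\psi|b_j\rangle\ge 0$. Such rotations replace $U$ by $U'=D_1UD_2$ with $D_1,D_2$ diagonal unitaries, so $U'$ is again unitary, and KD classicality then forces $\langle b_j|a_i\rangle\ge 0$ for all $i\in S_\psi$, $j\in T_\psi$. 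Combined with $|\langle a_i|b_j\rangle|=\frac{1}{\sqrt{d}}$, this pins down the $n_{\mathcal{A}}(\psi)\times n_{\mathcal{B}}(\psi)$ submatrix $V=(U'_{ij})_{i\in S_\psi, j\in T_\psi}$ to have every entry equal to $\frac{1}{\sqrt{d}}$.

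The crux is to convert this constant block into a bound by exploiting the orthonormality of the rows and columns of the unitary $U'$. I would pick two distinct rows $i,i'\in S_\psi$ (available since $n_{\mathcal{A}}(\psi)\ge 2$) and split their orthogonality relation over the columns:
\begin{equation}
0=\sum_{j\in T_\psi}U'_{ij}\overline{U'_{i'j}}+\sum_{j\notin T_\psi}U'_{ij}\overline{U'_{i'j}}=\frac{n_{\mathcal{B}}(\psi)}{d}+\sum_{j\notin T_\psi}U'_{ij}\overline{U'_{i'j}}.
\end{equation}
The triangle inequality bounds the tail by $\sum_{j\notin T_\psi}|U'_{ij}||U'_{i'j}|=\frac{d-n_{\mathcal{B}}(\psi)}{d}$, so $\frac{n_{\mathcal{B}}(\psi)}{d}\le\frac{d-n_{\mathcal{B}}(\psi)}{d}$, i.e. $n_{\mathcal{B}}(\psi)\le\frac{d}{2}$. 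The identical computation applied to two distinct columns $j,j'\in T_\psi$ (available since $n_{\mathcal{B}}(\psi)\ge 2$) yields $n_{\mathcal{A}}(\psi)\le\frac{d}{2}$, completing the contradiction.

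The only genuinely delicate point is the availability of two distinct rows (and two distinct columns) inside each support, which is precisely where the hypothesis ``not a basis vector'' enters: without it, e.g. for $|\psi\rangle=|b_j\rangle$ with $n_{\mathcal{A}}(\psi)=d$, the conclusion would be false. Everything else is routine --- the phase normalization mirrors Theorem \ref{nonclassicality on DFT}, and the decisive estimate is a one-line use of the triangle inequality against the orthonormality of $U'$. In particular, no DFT-specific machinery (Vandermonde or congruence-class arguments) is needed here; the argument uses only $|U_{ij}|=\frac{1}{\sqrt{d}}$ together with unitarity, which is exactly why it extends from the DFT to arbitrary MUB transition matrices.
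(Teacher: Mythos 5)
Your proposal is correct and follows essentially the same route as the paper: both arguments reduce KD classicality, via global phase rotations, to a constant $\frac{1}{\sqrt{d}}$ block of the unitary indexed by the supports $S_\psi,T_\psi$, and then split an orthogonality relation (two distinct columns $\langle b_j|b_{j'}\rangle=0$, resp. two distinct rows $\langle a_i|a_{i'}\rangle=0$, available because $|\psi\rangle$ is not a basis vector) into a block sum plus a tail, bounding the tail to force $n_{\mathcal{A}}(\psi)\le\frac{d}{2}$ and $n_{\mathcal{B}}(\psi)\le\frac{d}{2}$. The only cosmetic differences are that you bound the tail by the triangle inequality where the paper invokes the Cauchy--Schwarz inequality (both yield the identical bound $\frac{d-n_{\mathcal{B}}(\psi)}{d}$, resp. $d-n_{\mathcal{A}}(\psi)$, since all entries have modulus $\frac{1}{\sqrt{d}}$), and that you phrase the argument as a contrapositive rather than a contradiction.
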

\begin{proof}
Suppose that $|\psi\rangle$ is KD classical. Similar to the proof of Theorem \ref{nonclassicality on DFT}, we perform global phase rotations and reorder and relabel the basis vectors such that $\langle a_{i}|\psi\rangle >0$ and $\langle \psi|b_{j}\rangle >0$ for $i\in \mathbb{Z}_{n_{\mathcal {A}}(\psi)}$ and $j\in \mathbb{Z}_{n_{\mathcal {B}}(\psi)}$. Thus, $\langle b_{j}|a_{i}\rangle \geq 0$ for the same range of $i$ and $j$. Since $|U_{ij}|=|\langle a_{i}|b_{j}\rangle|=\frac{1}{\sqrt{d}}$, we have $\langle a_{i}| b_{j}\rangle= \frac{1}{\sqrt{d}}$ for $i\in \mathbb{Z}_{n_{\mathcal {A}}(\psi)}$ and $j\in \mathbb{Z}_{n_{\mathcal {B}}(\psi)}$.

Note that it is impossible that $n_{\mathcal {A}}(\psi)=1$ or $n_{\mathcal {B}}(\psi)=1$ since $|\psi\rangle$ is not a basis vector.  For any $0\leq j<j'\leq \mathbb{Z}_{n_{\mathcal {B}}(\psi)}$, we have
\begin{equation}
\label{}
\begin{aligned}
0=&d\langle b_{j}|b_{j'}\rangle \\
 =&d\sum_{i=0}^{n_{\mathcal {A}}-1}\langle b_{j}|a_{i}\rangle \langle a_{i}|b_{j'}\rangle+d\sum_{i=n_{\mathcal {A}}}^{d-1}\langle b_{j}|a_{i}\rangle \langle a_{i}|b_{j'}\rangle\\
 =& n_{\mathcal {A}}(\psi)+P.
\end{aligned}
\end{equation}
where $P:=d\sum_{i=n_{\mathcal {A}}}^{d-1}\langle b_{j}|a_{i}\rangle \langle a_{i}|b_{j'}\rangle$. It follows that $P=-n_{\mathcal {A}}(\psi)$.

If $n_{\mathcal {A}}(\psi) > \frac{d}{2}$, we have $|P|>\frac{d}{2}$. While
\begin{equation}
\label{}
\begin{aligned}
|P|\leq & d\sqrt{\sum_{i=n_{\mathcal {A}}}^{d-1}|\langle b_{j}|a_{i}\rangle|^{2}\sum_{i=n_{\mathcal {A}}}^{d-1}|\langle a_{i}|b_{j'}\rangle|^{2}} \\
 =&d-n_{\mathcal {A}}\\
 <& \frac{d}{2}.
\end{aligned}
\end{equation}
It is a contradiction. Note that the first inequality follows from Cauchy-Schwarz inequality. This contradiction implies that $|\psi\rangle \in \mathcal {H}$ is KD nonclassical.

The case when $n_{\mathcal {B}}(\psi) > \frac{d}{2}$ can be proved similarly by analyzing $\langle a_{i}|a_{i'}\rangle$ for any $0\leq i<i'\leq \mathbb{Z}_{n_{\mathcal {A}}(\psi)}$.
\end{proof}

Theorem \ref{nonclassicality on one} gives a sufficient condition of KD nonclassicality for two mutually unbiased observables, whether they are completely incompatible or not. However, it is still unclear that $|\psi\rangle \in \mathcal {H}$ is KD nonclassical or not if $n_{\mathcal {A}}(\psi) \leq \frac{d}{2}$ and $n_{\mathcal {B}}(\psi) \leq \frac{d}{2}$.

\section{Conclusion}
We have studied the uncertainty diagram and the Kirkwood-Dirac nonclassicality based on DFT in a $d$ dimensional system. We show that for the uncertainty diagram of the DFT matrix, there is no ``hole" in the region of the $(n_{\mathcal {A}}, n_{\mathcal {B}})$-plane above and on the line $n_{\mathcal {A}}+n_{\mathcal {B}}\geq d+1$, whether the bases $\mathcal {A},\mathcal {B}$ are not complete incompatible bases or not. The absence of states lies strictly above the hyperbola of $n_{\mathcal {A}}(\psi)n_{\mathcal {B}}(\psi) = d$ and strictly below the line $n_{\mathcal {A}}+n_{\mathcal {B}}= d+1$. We also show where the holes are when $n_{\mathcal {B}}=2, 3$. Then we present that the KD nonclassicality of a state based on the DFT matrix can be completely characterized by using the support uncertainty relation $n_{\mathcal {A}}(\psi)n_{\mathcal {B}}(\psi)\geq d$. That is, a state $|\psi\rangle$ is KD nonclassical if and only if $n_{\mathcal {A}}(\psi)n_{\mathcal {B}}(\psi)> d$, whenever $d$ is prime or not. This result gives a positive answer to the conjecture in Ref.\cite{Bievre.2021}. There are still some questions left. For example, a state is nonclassical or not when $\mathcal {A},\mathcal {B}$ are MUBs and $n_{\mathcal {A}}(\psi) \leq \frac{d}{2}$ and $n_{\mathcal {B}}(\psi) \leq \frac{d}{2}$. Furthermore, how is the strength of the KD nonclassicality established ?

\begin{acknowledgments}
This work is supported by NSFC (Grant Nos. 11971151, 11901163, 62171264,11871019, 62272208), the Fundamental Research Funds for the Universities of Henan Province (NSFRF220402), Natural Science Foundation of Hebei Province (F2021205001).
\end{acknowledgments}

\appendix
\section{The proof of Lemma \ref{exist_mine}}
\label{Prove exist_mine}

\begin{proof}
Let us first consider the sufficiency. Without loss of generality, suppose that a submatrix $M=(\langle a_{i}|b_{j}\rangle)_{(d-n_{\mathcal {A}}) \times n_{\mathcal {B}}}$ with $n_{\mathcal {A}}\leq i \leq d-1$ and $0\leq j\leq n_{\mathcal {B}}-1$ satisfies the three conditions in Lemma \ref{exist_mine}. Let $S=\mathbb{Z}_{n_{\mathcal {A}}}$ and $T=\mathbb{Z}_{n_{\mathcal {B}}}$.

Note that $\dim\mathcal {H}(S,T)$ is equal to the dimension of nullspace of $M$, that is, $\dim\mathcal {H}(S,T)=n_{\mathcal {B}}-$Rank$(M)$. Then  $\dim\mathcal {H}(S,T)\geq 1$ since Rank$(M)<n_{\mathcal {B}}$. For any $S'\subset S$ for which $|S'|=n_{\mathcal {A}}-1$, assume $S'=S\setminus \{k\}, k \in \mathbb{Z}_{n_{\mathcal {A}}}$. Then $\dim\mathcal {H}(S',T)$=$n_{\mathcal {B}}-$Rank$(M')$, where $M'$ is a $(d-n_{\mathcal {A}}+1) \times n_{\mathcal {B}}$ submatrix of $U$ that is obtained by adding row $k$, i.e., $(\langle a_{k}|b_{0}\rangle,...,\langle a_{k}|b_{n_{\mathcal {B}}-1}\rangle)$, to $M$. By condition (ii), we have Rank$(M')=$Rank$(M)+1$. Thus, $\dim\mathcal {H}(S',T)$=$n_{\mathcal {B}}-$Rank$(M')=n_{\mathcal {B}}-$Rank$(M)-1< \dim\mathcal {H}(S,T)$).

For any $T'\subset T$ for which $|T'|=n_{\mathcal {B}}-1$, assume $T'=T\setminus \{l\}, l\in \mathbb{Z}_{n_{\mathcal {B}}}$. Then $\dim\mathcal {H}(S,T')$ is equal to the dimension of nullspace of $M''$, i.e., $\dim\mathcal {H}(S,T')=n_{\mathcal {B}}-1-$Rank$(M'')$, where $M''$ is a $(d-n_{\mathcal {A}}) \times (n_{\mathcal {B}}-1)$ submatrix of $U$ that is obtained by removing column $l$ of $M$, i.e., $(\langle a_{1}|b_{l}\rangle,...,\langle a_{n_{\mathcal {A}}-1}|b_{l}\rangle)^{T}$. By condition (iii), we have Rank$(M'')=$Rank$(M)$. It follows $\dim\mathcal {H}(S,T')=n_{\mathcal {B}}-1-$Rank$(M'')=n_{\mathcal {B}}-1-$Rank$(M)< \dim\mathcal {H}(S,T)$. Therefore, $(n_{\mathcal {A}}, n_{\mathcal {B}})\in$ UNDC$(\mathcal {A}, \mathcal {B})$ by Lemma \ref{exist_Bievre}.

Now we turn to show the necessity. We proceed by contradiction. If condition (i) does not hold, i.e., Rank$(M) = n_{\mathcal {B}}$, it implies $\dim\mathcal {H}(S,T)=0$. Thus $(n_{\mathcal {A}}, n_{\mathcal {B}})\notin$ UNDC$(\mathcal {A}, \mathcal {B})$ by Lemma \ref{exist_Bievre}. It is a contradiction. If condition (ii) cannot be satisfied, i.e., for any $(d-n_{\mathcal {A}}) \times n_{\mathcal {B}}$ submatrix $M$ of $U$, there exists a row, called row $k$, that is added to $M$ such that Rank$(M')=$Rank$(M)$. Hence, $\dim\mathcal {H}(S',T)= \dim\mathcal {H}(S,T)=n_{\mathcal {B}}-$Rank$(M)$, where $S'=S\setminus \{k\}$. It means $(n_{\mathcal {A}}, n_{\mathcal {B}})\notin$ UNDC$(\mathcal {A}, \mathcal {B})$ by Lemma \ref{exist_Bievre}. Similarly, we can also obtain the desired result when condition (iii) cannot be satisfied.
\end{proof}

\section{The proof of Lemma \ref{properties on DFT}}
\label{Prove properties on DFT}

\begin{proof}
Suppose $\min\{s,t\}=u$. Consider the $u\times u$ submatrix $\widetilde{M}$ in Eq.(\ref{M-(i)}),
\begin{eqnarray}
\label{}
\widetilde{M}=&F\left(
\begin{array}{cccc}
 i_{0}, i_{0}+m, \cdots, i_{0}+(u-1)m;\\
 j_{0}, j_{1}, \cdots, j_{u-1}.
\end{array}
\right)\nonumber\\
=&\left(
\begin{array}{cccc}
 \omega_{d}^{i_{0}j_{0}} & \cdots & \omega_{d}^{i_{0}j_{u-1}}\\
 \cdots &\cdots &\cdots \\\
 \omega_{d}^{i_{0}j_{0}+(u-1)mj_{0}} & \cdots & \omega_{d}^{i_{0}j_{u-1}+(u-1)mj_{u-1}}
\end{array}
\right).\nonumber
\end{eqnarray}
Notice that
\begin{eqnarray}
\label{}
\det(\widetilde{M})&=&\omega_{d}^{i_{0}(j_{0}+\ldots+j_{u-1})}\prod_{0\leq l<k\leq u-1}(\omega_{d}^{mj_{k}}-\omega_{d}^{mj_{l}})\nonumber\\
&=&\omega_{d}^{i_{0}(j_{0}+\ldots+j_{u-1})}\prod_{0\leq l<k\leq u-1}(\omega_{\frac{d}{m}}^{j_{k}}-\omega_{\frac{d}{m}}^{j_{l}})\neq 0\nonumber
\end{eqnarray}
since $\omega_{\frac{d}{m}}^{j_{k}}\neq\omega_{\frac{d}{m}}^{j_{l}} \mod \frac{d}{m}$. Thus Rank($M'$)=$u$. The desired result is obtained.
\end{proof}

\section{The proof of Theorem \ref{nonexistance-nb=3}}
\label{Prove nonexistance-nb=3}

\begin{proof}
Sufficiency can be obtained by Theorem \ref{UNCD-main} and the discussion above Corollary \ref{existance on d+1}. Now we consider the necessary. Since $(d,3)$ belongs to UNDC$(\mathcal {A}, \mathcal {B})$, we have $n=0$. If $n=1$ or $2$, we can take $m=1$ such that the result holds. For $n\geq 3$, a point $(d-n, 3)$ belongs to UNDC$(\mathcal {A}, \mathcal {B})$ of $F$. It means there exists a $n \times 3$ submatrix
\begin{eqnarray}
\label{}
M=&\left(
\begin{array}{cccc}
 \omega_{d}^{i_{0}j_{0}} & \omega_{d}^{i_{0}j_{1}} & \omega_{d}^{i_{0}j_{2}}\\
 \cdots & \cdots & \cdots\\
 \omega_{d}^{i_{n-1}j_{0}} & \omega_{d}^{i_{n-1}j_{1}} & \omega_{d}^{i_{n-1}j_{2}}
\end{array}
\right).\nonumber
\end{eqnarray}
satisfying Lemma \ref{exist_mine}. Then Rank$(M)\leq 2$. Without loss of generality, assume that the first column of $M$ is a linear combination of the other two columns. It follows $\omega_{d}^{i_{k}j_{0}}=c_{1}\omega_{d}^{i_{k}j_{1}}+c_{2}\omega_{d}^{i_{k}j_{2}}$, where $k=0,\ldots,n-1$ and $c_{1},c_{2}$ are the coefficients of the linear combination. Then $c_{1}\omega_{d}^{i_{k}(j_{1}-j_{0})}+c_{2}\omega_{d}^{i_{k}(j_{2}-j_{0})}=1$. By the Euler's formula $e^{x\sqrt{-1}}=\cos x$+$\sqrt{-1}\sin x$, we obtain
\begin{eqnarray}
\label{expand by Eular}
c_{1}\cos \frac{2\pi i_{k}(j_{1}-j_{0})}{d}+c_{2}\cos \frac{2\pi i_{k}(j_{2}-j_{0})}{d} &=& 1,\nonumber\\
c_{1}\sin \frac{2\pi i_{k}(j_{1}-j_{0})}{d}+c_{2}\sin \frac{2\pi i_{k}(j_{2}-j_{0})}{d} &=& 0.
\end{eqnarray}
Calculating the square of two sides of the two equations in Eq.(\ref{expand by Eular}) and then adding the two equations, we obtain
\begin{eqnarray}
\label{c1c2}
2c_{1}c_{2}\cos \frac{2\pi i_{k}(j_{1}-j_{2})}{d} = 1-c_{1}^{2}-c_{2}^{2},
\end{eqnarray}

Case 1. If $c_{1}c_{2}= 0$, then $c_{1} = 0$ or $c_{2} = 0$. Without loss of generality, suppose $c_{1} = 0$. Then $\omega^{i_{k}(j_{0}-j_{2})}=c_{2}$. For $i_{k}\neq i_{l}$, $\omega^{(i_{k}-i_{l})(j_{0}-j_{2})}=1$. It follows $(i_{k}-i_{l})(j_{0}-j_{2}) \equiv 0 \mod d$ for any $k,l\in \mathbb{Z}_{n}$. Let $\gcd(j_{2}-j_{0}, d)=p$. We have $j_{2} \equiv j_{0} \mod p$ and $i_{k} \equiv i_{l} \mod \frac{d}{p}$. It means that $i_{k}$ is in the congruence class of $i_{l}$ modulo $\frac{d}{p}$ for any $k,l\in \mathbb{Z}_{n}$. It implies that $n\leq p$ since the cardinality of the congruence class $i_{l}$ modulo $\frac{d}{p}$ is $p$. Next we will show that it is impossible $n < p$.

Since $c_{1} = 0$, Rank$(M)$ must be 1. Otherwise, if Rank$(M)=2$, we can remove column $j_{1}$ and then Rank$(M'')=1$, where $M''$ is a new submatrix of $F$ that is obtained by removing column $j_{1}$ of $M$. It contradicts condition (iii) of Lemma \ref{exist_mine}.

For any $k,l\in \mathbb{Z}_{n}$, row $i_{k}$ of $M$ is proportional to row $i_{l}$ since Rank$(M)=1$. That is, $\omega_{d}^{(i_{k}-i_{l})j_{0}}=\omega_{d}^{(i_{k}-i_{l})j_{1}}=\omega_{d}^{(i_{k}-i_{l})j_{2}}$. Hence $(i_{k}-i_{l})(j_{0}-j_{1})\equiv 0 \mod d$. It follows $j_{0} \equiv j_{1}\equiv j_{2} \mod p$ since $p$ is prime. If $n < p$, it means that there exists at least one row, say row $i_{s}$, that is not in submatrix $M$ but satisfies $i_{s} \equiv i_{k} \mod \frac{d}{p}$. Thus, row $i_{s}$ can be added to $M$ to obtain submatrix $M'$ such that Rank$(M')=$Rank$(M)$ since $\omega_{d}^{(i_{k}-i_{s})j_{0}}=\omega_{d}^{(i_{k}-i_{s})j_{1}}=\omega_{d}^{(i_{k}-i_{s})j_{2}}$. It contradicts condition (ii) of Lemma \ref{exist_mine}. It implies $n = p$. The required divisor $m$ of $d$ is $p$ satisfying $m=n$ and $3m \leq d$.

Case 2. If $c_{1}c_{2}\neq 0$, from Eq.(\ref{c1c2}) we have
\begin{eqnarray}
\label{}
\cos \frac{2\pi i_{k}(j_{1}-j_{2})}{d} = \frac{1-c_{1}^{2}-c_{2}^{2}}{2c_{1}c_{2}},
\end{eqnarray}
It follows
\begin{eqnarray}
\label{}
\cos \frac{2\pi i_{k}(j_{1}-j_{2})}{d} = \cos \frac{2\pi i_{l}(j_{1}-j_{2})}{d}, k,l \in \mathbb{Z}_{n}.
\end{eqnarray}
Thus, we have
\begin{eqnarray}
\label{equation-n-3}
(i_{k}\pm i_{l})(j_{1}-j_{2}) \equiv 0 \mod d.
\end{eqnarray}
Suppose $\gcd(j_{1}-j_{2}, d)=p$. Then $j_{2} \equiv j_{1} \mod p$ and $i_{k} \equiv \pm i_{l} \mod \frac{d}{p}$. By Eq.(\ref{equation-n-3}) there are two subcases as follows.

Case 2.1. If $i_{k} \equiv i_{l} \mod \frac{d}{p}$ for any $k,l \in \mathbb{Z}_{n}$, by assumption we have
\begin{eqnarray}
\omega_{d}^{i_{k}j_{0}}&=&c_{1}\omega_{d}^{i_{k}j_{1}}+c_{2}\omega_{d}^{i_{k}j_{2}}=\omega_{d}^{i_{k}j_{2}}(c_{1}\omega_{d}^{i_{k}(j_{1}-j_{2})}+c_{2})   \nonumber\\
\omega_{d}^{i_{l}j_{0}}&=&c_{1}\omega_{d}^{i_{l}j_{1}}+c_{2}\omega_{d}^{i_{l}j_{2}}=\omega_{d}^{i_{l}j_{2}}(c_{1}\omega_{d}^{i_{l}(j_{1}-j_{2})}+c_{2}).   \nonumber
\end{eqnarray}
Notice that $\omega_{d}^{i_{k}(j_{1}-j_{2})}=\omega_{d}^{i_{l}(j_{1}-j_{2})}$ since $j_{2} \equiv j_{1} \mod p$ and $i_{k} \equiv i_{l} \mod \frac{d}{p}$. It follows $c_{1}\omega_{d}^{i_{k}(j_{1}-j_{2})}+c_{2}=c_{1}\omega_{d}^{i_{l}(j_{1}-j_{2})}+c_{2}$. Then $\omega_{d}^{(i_{k}-i_{l})j_{0}}=\omega_{d}^{(i_{k}-i_{l})j_{2}}$. It follows $j_{0} \equiv j_{1}\equiv j_{2} \mod p$ since $p$ is prime. A similar discussion in Case 1 can be applied here. We can obtain $n=p=m$.

Case 2.2. Otherwise, there exists a index set $\mathcal {C}\subseteq \mathbb{Z}_{n}$ such that $i_{k} \equiv i_{0} \mod \frac{d}{p}, k\in \mathcal {C}$ and $i_{l} \equiv -i_{0} \mod \frac{d}{p}, l \in \mathbb{Z}_{n}\setminus\mathcal {C}$. Applying a similar discussion in Case 2.1 for $i_{k} \equiv i_{0} \mod \frac{d}{p}, k\in \mathcal {C}$, we have the cardinality of $\mathcal {C}$ is $p$. Similarly, the cardinality of $\mathbb{Z}_{n}\setminus\mathcal {C}$ is also $p$. Thus $n=2p$. The required divisor $m$ of $d$ is $p$ satisfying $2m=n$ and $3m \leq d$.
\end{proof}

\nocite{*}

\bibliography{apssamp}

\begin{thebibliography}{}
\bibitem{Kirkwood.1933}
  J. G. Kirkwood, Quantum statistics of almost classical assemblies, Phys. Rev. \textbf{44}, 31 (1933).
\bibitem{Dirac.1945}
  P. A. M. Dirac, On the analogy between classical and quantum mechanics, Rev. Mod. Phys. \textbf{17}, 195 (1945).
\bibitem{Wigner.1932}
  E. Wigner, Phys. Rev. \textbf{40}, 749 (1932).
\bibitem{Wootters.1987}
  W. K. Wootters, Annals of Physics \textbf{176}, 1 (1987).
\bibitem{Lundeen.2012}
  J. S. Lundeen and C. Bamber, Procedure for direct measurement of general quantum states using weak measurement, Phys. Rev. Lett. \textbf{108}, 070402 (2012).
\bibitem{Bamber.2014}
  C. Bamber and J. S. Lundeen, Observing Dirac's classical phase space analog to the quantum state, Phys. Rev. Lett. \textbf{112}, 070405 (2014)
\bibitem{Thekkadath.2016}
  G. S. Thekkadath, L. Giner, Y. Chalich, M. J. Horton, J. Banker, and J. S. Lundeen, Direct measurement of the density matrix of a quantum system, Phys. Rev. Lett. \textbf{117}, 120401 (2016).
\bibitem{Pusey.2014}
  M. F. Pusey, Anomalous Weak Values are Proofs of Contextuality, Phys. Rev. Lett. \textbf{113}, 200401 (2014).
\bibitem{Dressel.2015}
  J. Dressel, Weak values as interference phenomena, Phys. Rev. A \textbf{91}, 032116 (2015).
\bibitem{Arvidsson-Shukur.2021}
  D. R. M. Arvidsson-Shukur, J. Chevalier Drori, and N. Yunger Halpern, Conditions tighter than noncommutation needed for nonclassicality, J. Phys. A \textbf{54}, 284001 (2021).
\bibitem{Bievre.2021}
  S. De Bi\`{e}vre, Complete Incompatibility, Support Uncertainty, and Kirkwood-Dirac Nonclassicality, Phys. Rev. Lett. \textbf{127}, 190404 (2021).
\bibitem{Xu.2022}
  J.W. Xu, Classification of incompatibility for two orthonormal bases, Phys. Rev. A \textbf{106}, 022217 (2022).
\bibitem{Bievre.2022}
  S. De Bi\`{e}vre, Relating incompatibility, noncommutativity, uncertainty and Kirkwood-Dirac nonclassicality, arXiv:2207.07451.
\bibitem{Donoho.1989}
  D. L. Donoho and P. B. Stark, Uncertainty principles and signal recovery, SIAM J. Appl. Math. \textbf{49}, 906 (1989).
\bibitem{Planat.2006}
  M. Planat, H. C. Rosu, and S. Perrine, A survey of finite algebraic geometrical structures underlying mutually unbiased quantum measurements, Found. Phys. \textbf{36}, 1662 (2006).
\bibitem{Durt.2010}
  T. Durt, B.-G. Englert, I. Bengtsson, and K. Zyczkowski, On mutually unbiased bases, Int. J. Quantum. Inform. \textbf{08}, 535 (2010).
\end{thebibliography}

\end{document}